\definecolor{webgreen}{rgb}{0,0.4,0}
\definecolor{webbrown}{rgb}{0.6,0,0}
\definecolor{purple}{rgb}{0.5,0,0.25}
\definecolor{darkblue}{rgb}{0,0,0.7}
\definecolor{darkred}{rgb}{0.7,0,0}
\newcommand{\ignore}[1]{}
\ifundef{\abstract}{}{\patchcmd{\abstract}%
    {\quotation}{\quotation\noindent\ignorespaces}{}{}}
\newtheorem{lemma}{{\sc Lemma}}
\newtheorem{theorem}{{\sc Theorem}}
\newtheorem{definition}{{\sc Definition}}
\crefname{claim}{claim}{claims}
\crefname{fact}{fact}{facts}
\crefname{algorithm}{algorithm}{algorithms}
\crefname{observation}{observation}{observations}
\crefname{equation}{equation}{equations}
\newenvironment{proof}{\noindent {\em Proof\/}:\enspace}
{\hfill $\blacksquare{}$ \medskip \\}
\DeclareMathOperator*{\argmax}{\arg\!\max}
\title{{\bf Efficiency and Budget Balance \\in General Quasi-linear Domains}~\thanks{
A preliminary version of this work has appeared in the conference on Web and Internet Economics (WINE), 2016. This work is funded by a Fulbright-Nehru postdoctoral fellowship and the National Science Foundation under grants 1320620, 1546752, and 1617590, and by the ARO under award W911NF-16-1-0061.}
}
\author[1]{Swaprava Nath}
\author[1]{Tuomas Sandholm}
\affil[1]{\small Carnegie Mellon University\\ Computer Science Department\\ \texttt{\{swapravn,sandholm\}@cs.cmu.edu}}
\date{}
\begin{document}
\maketitle

\begin{abstract}
 We study {\em efficiency} and {\em budget balance} for designing mechanisms in general quasi-linear domains.  \citet{green1979incentives} proved that one cannot generically achieve both. We consider strategyproof budget-balanced mechanisms that are approximately efficient. For deterministic mechanisms, we show that a strategyproof and budget-balanced mechanism must have a {\em sink} agent whose valuation function is ignored in selecting an alternative, and she is compensated with the payments made by the other agents. We assume the valuations of the agents come from a bounded open interval. This result strengthens Green and Laffont's impossibility result by showing that even in a restricted domain of valuations, there does not exist a mechanism that is strategyproof, budget balanced, and takes every agent's valuation into consideration---a corollary of which is that it cannot be efficient. Using this result, we find a tight lower bound on the inefficiencies of strategyproof, budget-balanced mechanisms in this domain. The bound shows that the inefficiency asymptotically disappears when the number of agents is large---a result close in spirit to \citet[Theorem 9.4]{green1979incentives}. However, our results provide worst-case bounds and the best possible rate of convergence.

 Next, we consider minimizing any convex combination of inefficiency and budget imbalance. We show that if the valuations are unrestricted, no deterministic mechanism can do asymptotically better than minimizing inefficiency alone.

 Finally, we investigate randomized mechanisms and provide improved lower bounds on expected inefficiency. We give a tight lower bound for an interesting class of strategyproof, budget-balanced, randomized mechanisms. We also use an optimization-based approach---in the spirit of {\em automated mechanism design}---to provide a lower bound on the minimum achievable inefficiency of any randomized mechanism.

 Experiments with real data from two applications show that the inefficiency for a simple randomized mechanism is 5--100 times smaller than the worst case. This relative difference increases with the number of agents.
\end{abstract}

\noindent {\sc JEL Codes: D82, D71, D74} \\

\noindent {\sc Keywords}: quasi-linear preferences; efficiency; budget balance; affine maximizer; Green-Laffont impossibility.

\section{Introduction}
\label{sec:intro}

Consider a group a friends deciding which movie to watch together. Furthermore, the movie can be watched in someone's home by renting it or at any of a number of movie theaters. Each of these choices incurs a cost. Since individual preferences are different and sometimes conflicting, the final choice may not make everybody maximally satisfied. This may cause some of the agents to misreport their preferences or drop out of the plan. To alleviate this problem, one can think of monetary transfers so friends who get their more-preferred choice pay more than friends that get their less-preferred choice. Desirable properties of such a choice and payment rule are that
(1) the total side payments (transfers among the friends) should sum to zero, so there is no surplus or deficit, and (2) the choice is efficient, that is, the movie that is selected maximizes the sum of all the friends' valuations. Since the valuations are private information of the friends, an efficient decision requires the valuations to be revealed truthfully. This simple example problem is representative of many joint decision-making problems that often involve monetary transfers. Consider, for example, a group of firms sharing time on a jointly-owned supercomputer, city dwellers deciding on the location and choice of a public project (e.g., stadium, subway, or library), mobile service providers dividing spectrum among themselves, or a student body deciding which musician or art performer to invite to entertain at their annual function. These problems all call for efficient joint decision making and involve---or could involve depending on the application---monetary transfers.


This is a ubiquitous problem in practice and a classic problem in the academic literature.
We study the standard model of this problem where the agents' utilities are {\em quasi-linear}: each agent's utility is her valuation for the selected alternative (e.g., the choice of movie) minus the money she has to pay.
The classic goal is to select an {\em efficient} alternative, that is, the one that maximizes the sum of the agents' valuations (also known as {\em social welfare}).

In the setting where valuations are private information, a mechanism needs to be designed that incentivizes the agents to reveal their valuations truthfully (by the revelation principle, there is no loss in objective from restricting attention to such direct-revelation mechanisms). We will study the problem of designing strategyproof mechanisms, that is, mechanisms where each agent is best off revealing the truth regardless of what other agents reveal.

Even though there are mechanisms that select efficient alternatives in a truthful manner (e.g., the Vickrey-Clarke-Groves (VCG) mechanism~\citep{Vickrey1961,Clarke1971,Groves1973}), the transfers by the individuals do not sum to zero (in public goods settings, the VCG mechanisms leads to too much money being collected from the agents). The execution of such a mechanism needs an external mediator who consumes the surplus (or may need to pay the deficit), to keep the mechanism truthful and efficient---a phenomenon known as `money burning' in literature. In our movie selection example, this implies that we need a third party who will collect the additional money paid by the individuals, which is highly impractical in many settings. This has attracted significant criticism of the VCG mechanism~\citep{ROTHKOPF07}.
Ideally, one would like to design strategyproof mechanisms that are efficient and {\em budget balanced}, that is, they do not have any surplus or deficit.
\citet{green1979incentives} proved a seminal impossibility for this setting: in the general quasi-linear domain, strategyproof, efficient mechanisms cannot be budget balanced.

In this paper, we primarily focus on the problem of minimizing inefficiency subject to budget balance in the general setting of quasi-linear utilities.
This is because, in the applications of interest to this paper (e.g., movie selection), budget balance is more critical than efficiency. However, we show that for a large set of agents, the per-agent inefficiency vanishes. We also show that for deterministic settings, optimizing the sum (or any convex combination) of efficiency and budget balance---which seems to be the most sensible objective---does not provide any asymptotic benefit over maximizing efficiency subject to budget balance. The main contributions of this paper are summarized in the following subsection.

\subsection{Contributions of this paper}
\label{sec:contrib}

In this paper, we assume that the agents' valuations are picked from a bounded open interval.
In \Cref{sec:deterministic}, we characterize the structure of truthful, budget balanced, {\em deterministic} mechanisms in this restricted domain, and show that any such mechanism must have a {\em sink} agent,\footnote{Mechanisms using this idea have been presented with different names in the literature. The original paper by \citet{green1979incentives} refers to this kind of agents as a {\em sample} of the population. Later \citet{gary2000polling} formalized the randomized version of this mechanism which is known as {\em polling} mechanism. \citet{faltings2004budget} refers to this as an {\em excluded coalition} (when there are multiple such agents) and \citet{moulin2009almost} mentions this as {\em residual claimants}. However, we use the term `sink' for brevity and convenience, and our paper considers a different setup and optimization objective.} whose reported valuation function does not impact the choice of alternative and she gets the payments made by the other agents (\Cref{thm:sink}).
This result strengthens the Green and Laffont impossibility by showing that even in a restricted domain of bounded valuations, there does not exist a mechanism that is strategyproof, budget balanced, and takes every agent's valuation into consideration---a corollary of which is that it cannot be efficient. With the help of this characterization, we find the optimal deterministic mechanism that minimizes the inefficiency. This provides a tight lower bound on the inefficiency of deterministic, strategyproof, budget-balanced mechanisms. By inefficiency of a mechanism in this paper, we mean the worst-case inefficiency over all valuation profiles. We provide a precise rate of decay ($\frac{1}{n}$) of the inefficiency with the increase in the number of agents (\Cref{thm:det-bound}). This implies that the inefficiency vanishes for large number of agents.

To contrast this mechanism with the class of mechanisms that minimize budget imbalance subject to efficiency, in \Cref{sec:joint} we consider the joint objective of {\em efficiency-budget spillover}, which is a convex combination of inefficiency and budget imbalance. We prove that if the valuations are unrestricted, no deterministic, strategyproof mechanism can reduce this spillover at a rate faster than $\frac{1}{n}$ (\Cref{thm:unimprovability}). In other words, in the deterministic setting, minimizing the joint objective does not give any asymptotic advantage over the solution of minimizing inefficiency with the constraint that the mechanism is budget balanced.

We investigate the advantages of randomized mechanisms in \Cref{sec:random}. We first consider the class of {\em generalized sink} mechanisms. These mechanisms have, for every possible valuation profile, a probability distribution over the agents that determines 
each agent's chance of becoming the sink. This class of mechanisms is budget balanced by design. We show examples where mechanisms from this class are not strategyproof (\Cref{mech:irrel-sink}), and then isolate an interesting subclass whose mechanisms are strategyproof, the \emph{modified irrelevant sink mechanisms} (\Cref{mech:mod-irrel-sink}). We show that no mechanism from this class can perform better than the deterministic mechanisms if the number of alternatives is greater than the number of agents (\Cref{thm:gen-sink}). Since inefficiency (weakly) increases with the number of alternatives (\Cref{thm:incr-inefficiency}), we consider the extreme case of two alternatives and compare the performances of different mechanisms. We show that a na\"ive uniform random sink mechanism and the modified irrelevant sink mechanism (\Cref{mech:mod-irrel-sink})  perform equally well (\Cref{thm:naive-random,thm:mod-irrel-sink}) and reduce the inefficiency by a constant factor of $2$ from that of the deterministic mechanisms. However, the optimal, strategyproof, budget-balanced, randomized mechanism performs better than these mechanisms. Since the structure of strategyproof randomized mechanisms for general quasi-linear utilities is unknown,\footnote{For randomized mechanisms, results involving special domains are known, e.g., facility location~\citep{thang2010randomized,procaccia2009approximate,feldman2011randomized}, auctions~\citep{dobzinski2006truthful}, kidney exchange~\citep{ashlagi2013mix}, and most of these mechanisms aim for specific objectives.} we take a computational optimization-based approach to find the best mechanism for the special case of two agents. This approach is known in the literature as {\em automated mechanism design} \citep{conitzer2002complexity}. For an overview, see \citet{sandholm2003automated}. We prove that for a discrete valuation space with $3$ levels, the optimal inefficiency is reduced by a factor of $7$ (\Cref{thm:random-opt-lower}) from that of deterministic mechanisms. However, when the number of levels increases---thereby making the lower bound tighter to the actual open-interval problem---the improvement factor reduces to less than $5$ (\Cref{fig:lower-bound}). This is a significant improvement over the class of randomized sink mechanisms, which only improve over the best deterministic mechanism by a factor of 2.

We present experiments using real data from two applications. They show that in practice the inefficiency is significantly smaller than the worst case bounds (\Cref{sec:experiments}).
We conclude the paper in \Cref{sec:concl} and present future research directions. For a cleaner presentation, we defer most of the proofs to the appendix.

\subsection{Relationship to the literature}
\label{sec:literature}

The Green-Laffont impossibility result motivated the research direction of designing efficient mechanisms that are minimally budget imbalanced. The approach is to redistribute the surplus money in a way that satisfies truthfulness of the mechanism \citep{bailey1997demand,cavallo2006optimal}. The {\em worst case optimal} and {\em optimal in expectation} guarantees have been given for this class of mechanisms in restricted settings~\citep{guo2008optimal,moulin2009almost,guo2009worst}. The performance of this class of {\em redistribution} mechanisms has been evaluated in interesting special domains such as allocating single or multiple (identical or heterogeneous) objects~\citep{gujar2011redistribution}.
Also, mechanisms have been developed and analyzed that are budget balanced (or no deficit) and
minimize the inefficiency in special settings~\citep{masso2015cost,guo2014better,Mishra2016}. Characterization of strategyproof budget-balanced mechanisms in the setting of cost-sharing is explored by \citet{moulin2001strategyproof} and its quantitative guarantees are presented by \citet{roughgarden2009quantifying}. 

If the distribution of the agents' valuations is known and we assume common knowledge among the agents over those priors, the strategyproofness requirement can be weakened to Bayesian incentive compatibility. In that weaker framework,  mechanisms can extract full expected efficiency and achieve budget balance~\citep{dAspremont1979,arrow1979property}.
But these mechanisms need the knowledge of the priors over the valuations.

The general quasi-linear setting is important since there are settings, e.g., public goods, where the agents can have arbitrary valuations over the alternatives and the impossibility of Green and Laffont still holds.
Therefore, in the general quasi-linear setting, for mechanisms without priors, it is an important open question to characterize the class of strategyproof budget-balanced mechanisms, to find such mechanisms that minimize inefficiency, and to find strategyproof mechanisms that minimize the sum (or other convex combination) of inefficiency and budget imbalance. This paper addresses this important question in the general quasi-linear setting, for both deterministic and randomized settings. Our approach is also prior-free---the strategyproofness guarantees consider the worst-case scenarios. We show that the answers are asymptotically positive: even in such a general setup, the Green-Laffont impossibility is not too restrictive when the number of agents is large, and our mechanisms seem to work well on real-world datasets.

\section{Model and definitions}

We denote the set of agents by $N = \{1,2,\ldots,n\}$ and the set of alternatives by $A = \{a_1, a_2, \ldots, a_m\}$. We assume that each agent's valuation is drawn from an open interval $(-\frac{M}{2}, \frac{M}{2}) \subset \mathbb{R}$, that is, the valuation of agent $i$ is a mapping $v_i : A \to (-\frac{M}{2}, \frac{M}{2}), \forall i \in N$ and is a private information. Denote the set of all such valuations of agent $i$ as $V_i$ and the set of valuation profiles by $V = \times_{i \in N} V_i$.

A {\em mechanism} is a tuple of two functions $\langle f, \mathbf{p} \rangle$, where $f$ is called the {\em social choice function} (SCF) that selects the {\em allocation} and $\mathbf{p} = (p_1,p_2, \ldots, p_n)$ is the vector of {\em payments}, $p_i : V \to \mathbb{R}, \forall i \in N$. The utility of agent $i$ for an alternative $a$ and valuation profile $v  \equiv (v_i,v_{-i})$ is given by the {\em quasi-linear} function: $v_i(a) - p_i(v_i,v_{-i})$. For {\em deterministic} mechanisms, $f : V \to A$ is a deterministic mapping, while for {\em randomized} mechanisms, the allocation function $f$ is a lottery over the alternatives, that is, $f : V \to \Delta A$. With a slight abuse of notation, we denote $v_i(f(v_i,v_{-i})) \equiv \mathbb{E}_{a \sim f(v_i,v_{-i})} v_i(a) = f(v_i,v_{-i}) \cdot v_i$ to be the expected valuation of agent $i$ for a randomized mechanism.
The following definitions are standard in the mechanism design literature.

\begin{definition}[Strategyproofness]
\label{def:strategyproofness}
 A mechanism $\langle f, \mathbf{p} \rangle$ is {\em strategyproof} if for all $v \equiv (v_i, v_{-i}) \in V$,
 $$v_i(f(v_i,v_{-i})) - p_i(v_i,v_{-i}) \geq v_i(f(v_i',v_{-i})) - p_i(v_i',v_{-i}), \quad \forall \ v_i' \in V_i, i \in N.$$
\end{definition}

\begin{definition}[Efficiency]
\label{def:efficiency}
 An allocation $f$ is {\em efficient} if it maximizes social welfare, that is,
 $f(v) \in \argmax_{a \in A} \sum_{i \in N} v_i(a), \ \forall v \in V$.
\end{definition}

\begin{definition}[Budget Balance]
\label{def:budget}
 A payment function $p_i: V \to \mathbb{R}, i \in N$ is budget balanced if
 $\sum_{i \in N} p_i(v) = 0, \ \forall v \in V$.
\end{definition}
In addition, in parts of
this paper we will consider mechanisms that are oblivious to the alternatives---a property known as {\em neutrality}. To define this, we consider a permutation $\pi : A \to A$ of the alternatives. Therefore, $\pi$ over a randomized mechanism and over a valuation profile will imply that the probability masses and the valuations of the agents are permuted over the alternatives according to $\pi$, respectively.\footnote{We have overloaded the notation of $\pi$ following the convention in social choice literature (see, e.g., \citet{myerson2013fundamentals}). The notation $\pi(v)$ denotes the valuation profile where the alternatives are permutated according to $\pi$.}
\begin{definition}[Neutrality]
\label{def:neutrality}
 A mechanism $\langle f, \mathbf{p} \rangle$ is {\em neutral} if for every permutation of the alternatives $\pi$ (where $\pi(v) \neq v$) we have
 $$\pi(f(v)) = f(\pi(v)) \quad \text{ and } \quad p_i(\pi(v)) = p_i(v), \quad \forall v \in V, \forall i \in N.$$
\end{definition}
Note that efficient social choice functions are neutral and the Green-Laffont result implicitly assumes this property.

The most important class of allocation functions in the context of deterministic mechanisms are {\em affine maximizers}, defined as follows.
\begin{definition}[Affine Maximizers]
\label{def:affine-max}
 An allocation function $f$ is an {\em affine maximizer} if there exist real numbers $w_i \geq 0, i \in N$, not all zeros, and a function $\kappa : A \to \mathbb{R}$ such that
 $f(v) \in \argmax_{a \in A} \left( \sum_{i \in N} w_i v_i(a) + \kappa(a) \right).$
\end{definition}

As we will explain in the body of this paper, we will focus on {\em neutral} affine maximizers~\citep{mishra2012roberts}, where the function $\kappa$ is zero.
\begin{align}
 \label{eq:neutral-AM}
 f(v) \in \argmax_{a \in A} \sum_{i \in N} w_i v_i(a) & & \text{\bf neutral affine maximizer}
\end{align}

The following property of the mechanism ensures that two different payment functions of an agent, say $i$, that implement the same social choice function differ from each other by a function that does not depend on the valuation of agent $i$.\footnote{This definition is a generalization of auction revenue equivalence and is commonly used in the social choice literature (see, e.g., \citet{heydenreich2009characterization}).}
\begin{definition}[Revenue Equivalence]
\label{def:revenue-eq}
 An allocation $f$ satisfies {\em revenue equivalence} if for any two payment rules $p$ and $p'$ that make $f$ strategyproof, there exist functions $h_i : V_{-i} \to \mathbb{R}$, such that
 $$p_i(v_i,v_{-i}) = p_i'(v_i,v_{-i}) + h_i(v_{-i}), \ \forall v_i \in V_i, \forall v_{-i} \in V_{-i}, \forall i \in N.$$
\end{definition}

The metrics of inefficiency we consider in this paper are defined as follows.
\begin{definition}[Sample Inefficiency]
\label{def:sample-inefficiency}
The {\em sample inefficiency} for a deterministic mechanism $\langle f, \mathbf{p} \rangle$ is:
 \begin{equation}
 \label{eq:sample-inefficiency}
  \text{$r_n^M(f) := \frac{1}{nM} \sup_{v \in V} \left[ \max_{a \in A} \sum_{i \in N} v_i(a) - \sum_{i \in N} v_i(f(v)) \right].$}
\end{equation}
The metric is adapted to {\em expected sample inefficiency} for randomized mechanisms:
\begin{equation}
 \label{eq:sample-inefficiency-randomized}
  \text{$r_n^M(f) := \frac{1}{nM} \sup_{v \in V} \left\{ \mathbb{E}_{f(v)} \left[ \max_{a \in A} \sum_{i \in N} v_i(a) - \sum_{i \in N} v_i(f(v)) \right] \right\}.$}
\end{equation}
\end{definition}
The majority of this paper is devoted to finding strategyproof and budget balanced mechanisms $\langle f, \mathbf{p} \rangle$ that minimize the sample inefficiency.

A different, but commonly used, metric of inefficiency in the literature is the worst-case ratio of the social welfare of the mechanism and the maximum social welfare: $\inf_{v \in V} \frac{\sum_{i \in N} v_i(f(v))}{\max_{a \in A} \sum_{i \in N} v_i(a)}$. A conclusion similar to what we prove in this paper: ``{\em inefficiency vanishes when $n \to \infty$}'', holds in that metric as well, but unlike our metric, that metric would require an additional assumption that the valuations are positive, which is not always the case in a quasi-linear domain.

We are now ready to start presenting our results.
We begin with deterministic mechanisms that are strategyproof and budget balanced.

\section{Deterministic, strategyproof, budget-balanced mechanisms}
\label{sec:deterministic}

Before presenting the main result of this section, we formally define a class of mechanisms we call {\em sink} mechanisms. A sink mechanism has one or more
{\em sink} agents, given by the set $S \subset N$, picked a priori, whose valuations are not used when computing the allocation (i.e., $f(v) = f(v_{-S})$) and the sink agents do not pay anything and together they receive the payments made by the other agents.
The advantage of a sink mechanism is that it is strategyproof if it is strategyproof for the agents other than the sink agents and the surplus is divided among the sink agents in some reasonable manner, and sink mechanisms are budget balanced by design.
An example of a sink mechanism is where $S = \{i_s\}$ (only one sink agent) and $f(v_{-i_s})$ chooses an alternative that would be efficient had agent $i_s$ not exist, that is,
$f(v_{-i_s}) = \argmax_{a \in A} \sum_{i \in N \setminus \{i_s\}} v_i(a)$.
The \citet{Clarke1971} payment rule can be applied here to make the mechanism strategyproof for the rest of the agents---that is, for agents other than  $i_s$, $p_i(v_{-i_s}) = \max_{a \in A} \sum_{j \in N \setminus \{i_s, i\}} v_j(a) - \sum_{j \in N \setminus \{i_s, i\}} v_j(f(v_{-i_s})), \ \forall i \in N \setminus \{i_s\}.$  Paying agent $i_s$ the `leftover' money (that is, $p_{i_s}(v_{-i_s}) = -\sum_{j \in N \setminus \{i_s\}} p_j(v_{-i_s})$) makes the mechanism budget balanced.
Our first result establishes that the existence of a sink agent is not only sufficient but also \emph{necessary} for deterministic mechanisms.
\begin{theorem}
 Any deterministic, strategyproof, budget-balanced, neutral mechanism $\langle f, \mathbf{p} \rangle$ in the domain $V$ has at least one sink agent.\footnote{Green and Laffont's impossibility result holds for efficient mechanisms, and all efficient mechanisms are neutral. However, the neutrality of an efficient rule is upto tie-breaking, and Green-Laffont applies no matter how the tie is broken. Similarly, our result also holds irrespective of how the tie is broken. Therefore, this theorem covers and generalizes that result since having at least one sink agent implies that the outcome cannot be efficient.}
 \label{thm:sink}
\end{theorem}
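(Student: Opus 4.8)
The plan is to show that any deterministic, strategyproof, budget-balanced, neutral mechanism $\langle f, \mathbf p\rangle$ on the open-interval domain $V$ must have an agent whose report never affects $f$ and who receives the sum of the others' payments. I would start from the structural theory of strategyproof allocation rules. Since $V$ is an open convex (product of intervals) domain, each $f$ that is implementable is (weakly) monotone, and on such domains revenue equivalence holds, so the payments are pinned down up to additive terms $h_i(v_{-i})$. The key classical fact I want to invoke is a Roberts-type characterization in this restricted but still ``rich enough'' domain: a deterministic strategyproof neutral SCF with range of size at least $2$ (actually at least $3$ is the safe regime; the $|A|=2$ case I would treat separately, or note it reduces to a threshold rule) must be a neutral affine maximizer, $f(v)\in\argmax_{a\in A}\sum_{i}w_i v_i(a)$ with weights $w_i\ge 0$ not all zero, as in \eqref{eq:neutral-AM}. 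The ``neutral'' hypothesis is exactly what kills the additive $\kappa(a)$ term.

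Next I would plug this affine-maximizer form into the budget-balance constraint. For a neutral affine maximizer, the canonical (Groves-like) payments that implement it are weighted-VCG payments: $p_i(v) = -\frac{1}{w_i}\big(\sum_{j\ne i} w_j v_j(f(v))\big) + h_i(v_{-i})$ for agents with $w_i>0$, and for agents with $w_i=0$ the value $v_i$ does not enter the $\argmax$ at all. By revenue equivalence, \emph{every} strategyproof payment rule implementing $f$ has this form. Now impose $\sum_i p_i(v)=0$ for all $v\in V$. The idea is to differentiate (or take discrete differences) in the direction of a single agent's valuation $v_k$. For an agent $k$ with $w_k>0$, if there are at least two agents with positive weight, varying $v_k$ across a tie-breaking boundary of $f$ changes $f(v)$, which changes the term $-\frac{1}{w_j}\sum_{\ell\ne j} w_\ell v_\ell(f(v))$ in $p_j$ for the other positive-weight agents $j$; one shows these variations cannot all cancel against $p_k$ plus the $v_k$-independent terms $h_i(v_{-i})$ unless in fact at most one agent has positive weight. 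More carefully: $\sum_i p_i(v) = -\sum_{i:\,w_i>0}\frac{1}{w_i}\sum_{j\ne i}w_j v_j(f(v)) + \sum_i h_i(v_{-i})$. Collecting the coefficient of each $v_\ell(f(v))$ in the first sum gives $-\,w_\ell\sum_{i:\,w_i>0,\, i\ne \ell}\frac{1}{w_i}$, and for this to be a function independent of how $f(v)$ moves (so that the total is identically zero) we need, for each $\ell$, either $w_\ell = 0$ or $\sum_{i:\,w_i>0,\,i\ne\ell}\frac1{w_i}=0$; the latter is impossible when two or more weights are positive. Hence exactly one agent, say $i_s$, has $w_{i_s}>0$ and all others have weight $0$. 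By neutrality and the affine-maximizer form, $f(v)=\argmax_a v_{i_s}(a)$ depends only on $v_{i_s}$... but that is the wrong sink — so I must be careful about the orientation. Re-examining: the agent with positive weight is the one whose valuation is \emph{used}; all the zero-weight agents are irrelevant to $f$. To land on a genuine sink in the theorem's sense, observe that with only $i_s$ carrying weight, the others' payments $p_j(v)= -\frac{1}{?}\cdots$ — rather, Groves payments for a single-weight maximizer make the non-$i_s$ agents pay (Clarke-style externality) and $i_s$ is the residual claimant; re-indexing, the sink is $i_s$ precisely when we additionally use budget balance to force $p_{i_s}=-\sum_{j\ne i_s}p_j$, which is automatic, and the remaining freedom $h_j(v_{-j})$ is fixed by strategyproofness of the others. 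I would write this final bookkeeping step carefully to match \Cref{thm:sink}'s statement that $f(v)=f(v_{-S})$ and the sink collects the rest.

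The main obstacle is establishing the Roberts-type / affine-maximizer characterization on this specific domain: the classical Roberts theorem needs an unrestricted domain, whereas here valuations live in a bounded open interval. I would handle this either by citing \citet{mishra2012roberts} (who the authors already reference for ``neutral affine maximizers'') for the neutral version on sufficiently rich order-based domains — the open-interval product domain is convex with nonempty interior and contains, after normalization, enough directions to run the standard 2-cycle/affine-independence argument — or, if a clean citation is not available for exactly this domain, by reproving it: use weak monotonicity plus the convexity of $V$ to get the ``taxation principle'' and the Rockafellar-type cycle-monotonicity, then invoke that neutrality forces the offsets $\kappa(a)$ to vanish and the positive combination of valuations to be agent-weighted. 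The boundary-openness is actually helpful (no corner cases), but a referee will want the richness condition stated explicitly, so I would include a short lemma: \emph{on $V=\prod_i(-M/2,M/2)^A$, every deterministic strategyproof neutral SCF with $|\mathrm{range}(f)|\ge 2$ is a neutral affine maximizer}, and then the budget-balance cancellation argument above is a page of elementary algebra. The $|A|=2$ boundary case and the tie-breaking subtlety flagged in the theorem's footnote I would dispatch in a remark, noting the cancellation argument is insensitive to the tie-breaking rule.
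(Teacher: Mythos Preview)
Your high-level architecture matches the paper: invoke \citet{mishra2012roberts} to get that $f$ is a neutral affine maximizer on the open-interval domain, use revenue equivalence to write payments as weighted-Groves payments plus $h_i(v_{-i})$ terms, and then argue that budget balance forces some $w_i=0$. The first two steps are fine and are exactly what the paper does.

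The gap is in your budget-balance cancellation argument. From
\[
\sum_i h_i(v_{-i}) \;=\; \sum_{\ell} \Big(w_\ell \sum_{i\ne\ell,\,w_i>0}\tfrac{1}{w_i}\Big)\, v_\ell(f(v))
\]
you conclude that, ``for this to be a function independent of how $f(v)$ moves,'' each coefficient must vanish, hence at most one $w_i$ is positive. This inference is not valid: the left-hand side $\sum_i h_i(v_{-i})$ \emph{does} depend on $v$; it is merely a sum of terms each missing one coordinate. So both sides can co-vary with $v$, and nothing forces the coefficients on the right to be zero. Indeed, your conclusion that ``exactly one agent has positive weight'' is demonstrably too strong: the paper's optimal mechanism in \Cref{thm:det-bound} has $n-1$ agents with equal positive weight and a single zero-weight sink, and it is strategyproof and budget balanced. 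That mechanism is a direct counterexample to the step you wrote, so the step cannot be correct as stated. (Your subsequent attempt to reorient ``which agent is the sink'' is a symptom of this: you landed on the wrong count of positive-weight agents, not merely the wrong label.)

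What the paper actually does for this step (\Cref{lemma:sink}) is more delicate and is the real content of the proof. Assuming all $w_i>0$, it constructs a carefully chosen profile $v^{\{n\}}$ sitting just on one side of an affine-maximizer tie, so that flipping any single agent $j\ne n$ from $v_j+\delta$ to $v_j$ changes $f$ from alternative $1$ to alternative $0$. Writing the budget-balance identity at $v^{\{n\}}$ and at each flipped profile $v^{\{j,n\}}$, it iteratively substitutes to replace every $h_i(\cdot)$ term that depends on $v_n$ by $h_n(\cdot)$ terms that do not, while leaving the ``welfare-like'' summand (which \emph{does} depend on $v_n$) untouched. The resulting identity has a left side independent of $v_n$ and a right side that varies with $v_n$, giving the contradiction. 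This is the missing idea: you need to exploit the specific ``missing one coordinate'' structure of the Groves residuals via a family of boundary profiles, not just stare at coefficients. If you want to salvage your write-up, replace the one-line coefficient argument by this profile-construction-and-substitution lemma (the $n=2$ case is a clean warm-up that already shows why additive separability of $h_1(v_2)+h_2(v_1)$ cannot equal a max-type term).
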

The proof involves two steps.
\begin{enumerate}
 \item We leverage the fact that a mechanism that satisfies the stated axioms must necessarily be a neutral 
      affine maximizer (\Cref{eq:neutral-AM}) and has a specific structure for payments. The characterization of the payment structure comes from revenue equivalence.
 \item The core of the proof then lies in showing that for such payment functions, it is impossible to have no sink agents (identified as agents that have zero weights, $w_i = 0$, in the affine maximizer). This is shown in a contrapositive manner---assuming that there is no sink agent, we construct valuation profiles that lead to a contradiction to budget balance.
\end{enumerate}
The complete proof is given in the appendix.

Our next goal is to find the mechanism in this class that gives the {\em lowest} sample inefficiency (\Cref{eq:sample-inefficiency}). In the proof of the next theorem (presented in the appendix) we show that this is achieved when there is exactly one sink and the neutral affine maximizer weights are equal for all agents other agents. This, in turn, yields the following lower bound on inefficiency. 
\begin{theorem}
 For every deterministic, strategyproof, budget-balanced, neutral mechanism $\langle f, \mathbf{p} \rangle$ over $V$, $r_n^M(f) \geq \frac{1}{n}$. This bound is tight.
\label{thm:det-bound}
\end{theorem}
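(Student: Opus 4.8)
The plan is to use the structural characterization from Theorem 1: any deterministic, strategyproof, budget-balanced, neutral mechanism over $V$ is a neutral affine maximizer with a sink set $S \neq \emptyset$. Write the weights as $w_i \geq 0$, with $w_i = 0$ exactly for $i \in S$. I would first argue that without loss of generality we may take $|S| = 1$ and all non-sink weights equal, because enlarging $S$ or unbalancing the weights can only increase the worst-case inefficiency. Concretely, with $n$ agents, the best we can hope for is a single sink $i_s$ whose valuation is ignored, and the chosen alternative maximizes $\sum_{i \neq i_s} w_i v_i(a)$; by neutrality $\kappa \equiv 0$. So the sample inefficiency is
\[
 r_n^M(f) = \frac{1}{nM} \sup_{v \in V}\Bigl[\max_{a}\sum_{i \in N} v_i(a) - \sum_{i \in N} v_i(f(v))\Bigr],
\]
and the task is to lower-bound this over all choices of weights $(w_i)$ and sink set.

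The core step is a worst-case construction. For the lower bound I would exhibit, for any such mechanism, a valuation profile on which the efficient alternative beats the mechanism's alternative by at least $M$ in total social welfare. The natural choice: take two alternatives $a_1, a_2$ (extra alternatives only help the adversary, so two suffice). Let the sink agent $i_s$ have valuation close to $+\tfrac{M}{2}$ for $a_1$ and close to $-\tfrac{M}{2}$ for $a_2$, giving a swing of nearly $M$ in the true social welfare in favor of $a_1$. Let every non-sink agent be essentially indifferent, or tilted infinitesimally toward $a_2$, so that the weighted sum $\sum_{i \neq i_s} w_i v_i(\cdot)$ is maximized at $a_2$ (this is possible because the $w_i v_i$ terms for non-sink agents can be made to favor $a_2$ by an arbitrarily small margin while $i_s$'s ignored valuation strongly favors $a_1$). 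Then $f(v) = a_2$ but the efficient alternative is $a_1$, and the welfare gap approaches $M$; dividing by $nM$ gives $r_n^M(f) \geq \tfrac{1}{n}$ in the supremum. Some care is needed because $V$ is an open interval, so the extreme values $\pm\tfrac{M}{2}$ are not attained — hence the bound is a supremum/infimum statement and the $\tfrac{1}{n}$ is approached but the inequality $r_n^M(f) \geq \tfrac{1}{n}$ still holds since $r_n^M$ is itself defined via a supremum over $v \in V$.

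For tightness, I would verify that the single-sink Clarke mechanism described just before the theorem statement achieves $r_n^M(f) = \tfrac{1}{n}$: since only one agent's valuation (bounded in an interval of length $M$) is discarded, the social-welfare loss from choosing $\argmax_{a}\sum_{i \neq i_s} v_i(a)$ instead of $\argmax_a \sum_{i \in N} v_i(a)$ is at most the range of $v_{i_s}$, namely at most $M$ (and strictly less, but with supremum $M$). Dividing by $nM$ yields the matching upper bound $r_n^M(f) \leq \tfrac{1}{n}$.

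The main obstacle I anticipate is the first reduction — showing rigorously that one cannot do better than $\tfrac{1}{n}$ with an unequal-weight affine maximizer or a larger sink set. The adversary argument needs to be run uniformly over all weight vectors: given arbitrary $w_i \geq 0$ (not all zero) with at least one zero weight, I must choose $v$ making the weighted-optimal alternative differ from the true-optimal one by a full $M$ in unweighted welfare. Picking the sink agent's valuation to create the $M$-swing handles this regardless of the other weights, since the sink agent contributes $0$ to the weighted objective but up to $M$ to the true objective; the remaining agents' weights only determine which way the tie among them breaks, and the adversary controls that by an $\varepsilon$-perturbation. The only subtlety is when there are multiple sink agents — then the adversary gets to pile up swings across several ignored agents, making inefficiency strictly worse, which is consistent with (and in fact needed for) the claim that $|S|=1$ is optimal.
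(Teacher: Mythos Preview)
Your proposal is correct and tracks the paper's proof closely: both invoke Theorem~1 to reduce to a neutral affine maximizer with at least one sink, exhibit an adversarial profile in which the sink's near-$M$ swing dominates the true welfare while the non-sink agents are $\varepsilon$-tilted toward the other alternative in the weighted objective, and verify tightness via the single-sink Clarke mechanism. The one substantive difference is in handling unequal non-sink weights. The paper proves a separate lemma (Lemma~2) with a distinct three-agent construction showing that any weight imbalance $w_j > w_{j'}$ forces inefficiency strictly above $M$, so equal weights are the \emph{unique} minimizer among these mechanisms. Your observation that the basic sink-swing construction already yields inefficiency $\geq M$ regardless of the remaining weights is a cleaner route to the stated bound; you forgo the paper's sharper byproduct about uniqueness of the optimal weights, but that is not needed for the theorem as stated.
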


\section{Jointly minimizing budget imbalance and inefficiency}
\label{sec:joint}

In the previous section, we considered strategyproof, budget-balanced mechanisms that are minimally inefficient. We achieved a sample inefficiency lower bound of $\frac{1}{n}$.
\emph{Could one do better by, instead of requiring budget balance and minimizing inefficiency, relaxing budget balance by allowing money burning, and then minimizing the inefficiency from the allocation plus the inefficiency caused by money burning (or required subsidy from outside the mechanism)?}
This would seem like the sensible objective for minimizing overall waste.

In this section, we consider the joint problem of minimizing inefficiency and budget imbalance in the setting of deterministic mechanisms. We consider a convex combination of these two quantities since in the quasi-linear domain both of them contribute additively in the agents' utilities and social welfare. 
We assume that the combination proportions are normative constants with which the planner associates importance to the two factors of inefficiency and budget balance. Therefore, the coefficients of the convex combination are independent of the number of agents.
We show that for unrestricted valuations, i.e., when the valuation bound $M$ is large, considering this joint problem does not yield a better than $1/n$ rate of decay of the {\em efficiency-budget spillover} defined as follows.
\begin{align}
 \text{$\rho_n(f, \mathbf{p}) := \lim_{M \to \infty}\frac{1}{nM} \sup_{v \in V} \left[ \lambda \cdot T_1^n(f,v) + (1 - \lambda) \cdot T_2^n(\mathbf{p},v) \right],$} \label{eq:sample-joint-inefficiency}
\end{align}
Where $T_1^n(f,v) = \left( \max_{a \in A} \sum_{i \in N} v_i(a) - \sum_{i \in N} v_i(f(v)) \right)$ and $T_2^n(\mathbf{p},v) = \left| \sum_{i \in N} p_i(v) \right|.$
For $\lambda = 1$, that is, when budget imbalance is not a concern, one can use the VCG mechanism to get $\rho_n(f, \mathbf{p}) = 0$. Similarly, for $\lambda = 0$, a sink mechanism will give $\rho_n(f, \mathbf{p}) = 0$. So, the interesting cases are when $\lambda \in (0,1)$, and for this we have a solution that decays as $1/n$. In this section, we will assume that $\lambda, 0 < \lambda < 1$ is exogenous. Our goal is to find a strategyproof and neutral mechanism $\langle f, \mathbf{p} \rangle$ that minimizes the objective $\rho_n$.
We have shown in \Cref{sec:deterministic} that $T_1^n(f,v)$ can at most be a constant when $T_2^n(\mathbf{p},v)$ is zero for every $v$. Hence, for any improvement in the efficiency-budget spillover metric, that is, for $\rho_n(f, \mathbf{p}) = o(r_n(f))$, it is necessary that the term $\sup_{v \in V} \left[ \lambda \cdot T_1^n(f,v) + (1 - \lambda) \cdot T_2^n(\mathbf{p},v) \right]$ be $o(1)$. Since both $T_1^n(f,v)$ and $T_2^n(\mathbf{p},v)$ are non-negative, it is necessary that the factor $T_2^n(\mathbf{p},v) = o(1)$ for every $v \in V$. Our next result shows that it is impossible to have $T_2^n(\mathbf{p},v) = o(1), \ \forall v \in V \Leftrightarrow \lim_{n \to \infty} \sup_{v \in V} T_2^n(\mathbf{p},v) = 0$. Hence, for deterministic mechanisms with unrestricted valuations, the bound on inefficiency \emph{with no budget imbalance} (presented in \Cref{sec:deterministic}) is {\em asymptotically} optimal for this joint optimization problem as well.\footnote{It is easy to see that the conclusions of \Cref{thm:sink,thm:det-bound} hold even under the assumption of large $M$.}
\begin{theorem}[Unimprovability]
 \label{thm:unimprovability}
 For every deterministic, strategyproof, and neutral mechanism $\langle f, \mathbf{p} \rangle$ over $V$ and for every $\lambda \in (0,1)$, $\rho_n(f, \mathbf{p}) = \Omega \left( \frac{1}{n} \right)$. This bound is tight. For $\lambda = 0$, a sink mechanism, and for $\lambda = 1$, the VCG mechanism, achieves zero spillover.
\end{theorem}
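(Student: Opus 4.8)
\emph{Overview.} The plan is to run the reduction already sketched before the statement and then split into two cases according to the weights of the affine maximizer underlying $\langle f,\mathbf p\rangle$. Since $\lambda\in(0,1)$, to prove $\rho_n(f,\mathbf p)=\Omega(1/n)$ it suffices to exhibit, for every large enough $M$, one profile $v=v^{M}\in V$ with $\lambda\,T_1^{n}(f,v)+(1-\lambda)\,T_2^{n}(\mathbf p,v)\ge c\,M$ for a constant $c=c(\lambda)>0$ depending on neither $n$ nor $M$; substituting into \Cref{eq:sample-joint-inefficiency} gives $\rho_n(f,\mathbf p)\ge\lim_{M\to\infty}\tfrac{cM}{nM}=c/n$. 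Tightness of the $1/n$ rate follows from the optimal deterministic sink mechanism of \Cref{thm:det-bound}: it has $T_2^{n}\equiv 0$ and $\sup_v T_1^{n}=M$, hence $\rho_n=\lambda/n=O(1/n)$; the two endpoint claims ($\lambda=0$, sink; $\lambda=1$, VCG) are immediate because one of the two terms then vanishes identically. I may assume $|A|\ge 2$ (for $|A|=1$ every mechanism is efficient and $p_i\equiv 0$ is budget balanced, so $\rho_n=0$). By the characterization used in the proof of \Cref{thm:sink}, a deterministic, strategyproof, neutral mechanism on $V$ is a neutral affine maximizer (\Cref{eq:neutral-AM}, \Cref{def:affine-max}) $f(v)\in\argmax_{a\in A}\sum_i w_i v_i(a)$ with $w_i\ge 0$ not all zero, and by revenue equivalence (\Cref{def:revenue-eq}) its payments have the weighted-Groves form $p_i(v)=h_i(v_{-i})-\tfrac1{w_i}\sum_{j\ne i}w_j v_j(f(v))$ for every $i$ with $w_i>0$, while sink agents ($w_i=0$) satisfy $p_i(v)=h_i(v_{-i})$; each $h_i$ is a neutral function. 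Normalize so that $\max_i w_i=1$.

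\emph{Case 1: the weights are far from equal, $\min_i w_i\le \tfrac12$} (this includes every sink mechanism). Pick $k$ with $w_k\le\tfrac12$ and $\ell\ne k$ with $w_\ell=1$, fix distinct $a_1,a_2\in A$ and a small $\varepsilon>0$, and set $\delta=M/(8n)$. Take $v^{M}$ in which agent $k$ has $v_k(a_1)=\tfrac M2-\varepsilon$, $v_k(a_2)=-(\tfrac M2-\varepsilon)$ and $0$ elsewhere; agent $\ell$ has $v_\ell(a_2)=\tfrac M2-\varepsilon$, $v_\ell(a_1)=-(\tfrac M2-\varepsilon)$ and $0$ elsewhere; every other agent has $\delta$ on $a_1$ and $0$ elsewhere. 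A short computation of the weighted and unweighted welfares shows that the efficient alternative is $a_1$, with social welfare $(n-2)\delta$, whereas $f(v^{M})=a_2$, with social welfare $0$: agent $k$'s decisive preference for $a_1$ is ``overruled'' precisely because $k$ is underweighted, while agent $\ell$'s preference for $a_2$ is not. Hence $T_1^{n}(f,v^{M})\ge(n-2)\delta=\Omega(M)$ for $n\ge 3$ (a simpler two-alternative profile handles $n=2$), and therefore $\lambda\,T_1^{n}+(1-\lambda)\,T_2^{n}\ge\lambda\cdot\Omega(M)$.

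\emph{Case 2: the weights are nearly equal, $\min_i w_i>\tfrac12$, i.e.\ all $w_i\in(\tfrac12,1]$.} Now $f$ is ``close to'' efficient and, in particular, has no sink, so by \Cref{thm:sink} it cannot be budget balanced. The job is to make this quantitative and show $\sup_v T_2^{n}(\mathbf p,v)=\Omega(M)$. Writing budget balance as $\sum_i h_i(v_{-i})=\sum_i\tfrac1{w_i}\sum_{j\ne i}w_j v_j(f(v))$, one constructs a small, carefully chosen family of test profiles --- toggling a few agents between two valuation vectors while keeping the others fixed, and using the neutrality of the $h_i$ to pin down the residual offset terms --- on which this identity would force an algebraic contradiction of magnitude $\Theta(M)$; hence on at least one of these profiles $|\sum_i p_i(v)|=\Omega(M)$. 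Crucially, because the weights lie in $(\tfrac12,1]$, the coefficients $1/w_i$ and $w_j$ are bounded above and below by absolute constants, so the contradiction cannot be rescaled away and the defect stays linear in $M$ with a constant independent of $n$. Thus $\lambda\,T_1^{n}+(1-\lambda)\,T_2^{n}\ge(1-\lambda)\cdot\Omega(M)$.

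\emph{Conclusion and the main obstacle.} Combining the two cases, for every large $n$ there is a profile on which $\lambda\,T_1^{n}+(1-\lambda)\,T_2^{n}\ge\min\{c_1\lambda,\,c_2(1-\lambda)\}\,M$, which yields $\rho_n(f,\mathbf p)=\Omega(1/n)$, matching the sink-mechanism upper bound. I expect Case 2 to be the main obstacle. In Case 1 a single explicit profile suffices and the inequalities are elementary; in Case 2 one is proving a quantitative, payment-agnostic version of the Green--Laffont impossibility for a near-efficient affine maximizer, and the delicate point is choosing the test profiles so that every agent-specific offset $h_i$ cancels out of the relevant linear combination while the ``Groves part'' remains of order $M$ uniformly in $n$. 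This is exactly the difficulty already confronted in the proof of \Cref{thm:sink}; the present bound is obtained by re-running that construction and bookkeeping the size of the violation rather than only its non-vanishing.
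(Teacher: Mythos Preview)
Your overall strategy --- splitting on whether the affine-maximizer weights are ``far from equal'' ($\min_i w_i\le\tfrac12$) or ``nearly equal'' ($\min_i w_i>\tfrac12$) --- is different from the paper's. The paper only distinguishes sink ($\exists\,w_i=0$) from non-sink (all $w_i>0$) and, in the non-sink case, re-runs the telescoping construction from the proof of \Cref{thm:sink} as a contradiction argument against the hypothesis $\lim_{n\to\infty}\sup_v T_2^n=0$, leaning on $M\to\infty$ to choose $\delta$ and $v_n$ large enough to overwhelm the accumulated error. Your Case~1 is a genuine addition: the explicit profile cleanly gives $T_1^n\ge(n-2)M/(8n)=\Omega(M)$ whenever the weights are noticeably unbalanced, and the calculation checks out.

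The gap is in Case~2. You assert that ``because the weights lie in $(\tfrac12,1]$, the coefficients $1/w_i$ and $w_j$ are bounded \ldots\ so the defect stays linear in $M$ with a constant independent of $n$,'' and that this follows by ``re-running that construction and bookkeeping the size of the violation.'' But re-running the construction of \Cref{thm:sink} does not give an $n$-independent constant. The telescoping eliminates the $v_n$-dependence from the $h_i$ terms by substituting, at each stage, the near-budget-balance inequality at a new profile $v^S$; the number of such substitutions grows with $n$, and the paper's own quantitative version of this argument explicitly ends with a right-hand side of the form $K(n)\epsilon$. The resulting lower bound is only $\sup_v T_2^n\gtrsim (n-1)\delta/K(n)$. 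Bounding the weights in $(\tfrac12,1]$ controls each individual coefficient in the Groves part, but it does nothing to control $K(n)$, which counts how many times you invoke the inequality. So the constant you extract in Case~2 depends on $n$, and your combination $\min\{c_1\lambda,\,c_2(1-\lambda)\}$ does not deliver $\Omega(1/n)$ uniformly.

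The paper sidesteps this by never claiming $\sup_v T_2^n\ge cM$ with an absolute $c$. It argues by contradiction at the level of the limit: if $\sup_v T_2^n\to 0$ along a sequence of mechanisms, then for any fixed large $n$ one may take $M$ (hence $\delta$ and $|v_n|$) as large as needed to beat the $K(n)\epsilon$ term --- the $\lim_{M\to\infty}$ in the definition of $\rho_n$ is doing real work here. If you want to keep your two-case architecture, the honest fix in Case~2 is to run it the same way, as a contradiction against the sequence hypothesis, rather than asserting a uniform constant that the telescoping does not produce.
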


\section{Randomized, strategyproof, budget-balanced mechanisms}
\label{sec:random}

In \Cref{sec:deterministic}, we saw that the best sample inefficiency achieved by a deterministic budget balanced mechanism is $\frac{1}{n}$. In this section, we discuss how the inefficiency can be reduced by considering randomized mechanisms.

An intuitive approach is to consider a mechanism where each agent is picked as a sink with probability $\frac{1}{n}$.
\begin{definition}[Na\"ive Randomized Sink]
 \label{def:nrs}
 A {\em na\"ive randomized sink} (NRS) mechanism picks every agent as a sink w.p. $\frac{1}{n}$ and takes the efficient allocation without that agent. The payments of the non-sink agents are VCG payments without the sink. The surplus is transferred to the sink.
\end{definition}
Clearly, this mechanism is strategyproof, budget balanced, and neutral by design.
One can anticipate that this may not yield the best achievable inefficiency bound.
Unlike deterministic mechanisms, very little is known about the structure of randomized strategyproof mechanisms in the general quasi-linear setting. Furthermore, we consider mechanisms that are budget-balanced in addition. Hence, even though we can obtain an upper bound on the expected sample inefficiency ($r_n^M(f)$) by considering specific mechanisms like the NRS mechanism described above, the problem of providing a lower bound (i.e., no randomized mechanism can achieve a smaller $r_n^M(f)$ than a given number), seems elusive in the general quasi-linear setting.

Therefore, in the following two subsections, we consider two approaches, respectively. First, we show lower bounds in a special class of strategyproof, budget-balanced, randomized mechanisms. Second, we analytically provide a lower bound of the optimal, strategyproof, budget-balanced, randomized mechanism for two agents and two alternatives, using a discrete relaxation of the original problem (in the spirit of \emph{automated mechanism design}~\citep{conitzer2002complexity,sandholm2003automated}). This approach provides an approximation to the strategyproofness and optimality of the original problem. However, the problems of finding a mechanism that matches this lower bound and extending the lower bound to any number of agents and alternatives are left as future work.

\subsection{Generalized sink mechanisms}

In the first approach, we consider a broad class of randomized, budget-balanced mechanisms, which we coin {\em generalized sink mechanisms}. In this class, the probability of an agent $i$ to become a sink is dependent on the valuation profile $v \in V$, and we consider mechanisms with only {\em one} sink, i.e., if the probability vector returned by a generalized sink mechanism is $g(v)$, then w.p.\ $g_i(v)$, agent $i$ is treated as the {\em only} sink agent. (One can think of a more general class of sink mechanisms where multiple agents are treated as sink agents simultaneously. However, it is easy to see---by a similar argument to that in the context of deterministic mechanisms, just before \Cref{lemma:lub}---that using multiple sinks cannot decrease inefficiency.) Clearly, the na\"ive randomized sink mechanism belongs to this class.
Once agent $i$ is picked as a sink, the alternative chosen is the {\em efficient} one {\em without} agent $i$. All agents $j \neq i$ are charged a Clarke tax payment in the world without $i$, and the surplus amount of money is transferred to the sink agent $i$. \Cref{alg:generalized-sink} shows the steps of a generic mechanism in this class.

\begin{algorithm}[H]
\caption{Generalized Sink Mechanisms, ${\cal G}$}
\label{alg:generalized-sink}
 \begin{algorithmic}[1]
 \STATE {\bf Input}: a valuation profile $v \in V$
 \STATE A generic mechanism in this class is characterized by a probability distribution over the agents $N$ (which may depend on the valuation profile), $g : V \to \Delta N$
  \STATE The mechanism randomly picks one agent $i$ in $N$ with probability $g_i(v)$
  \STATE Treat agent $i$ as the sink
 \end{algorithmic}
\end{algorithm}
Clearly, not every mechanism in this class is strategyproof. The crucial aspect is how the probabilities of choosing the sink are decided. If the probability $g_i(v)$ depends on the valuation of agent $i$, that is, $v_i$, then there is a chance for agent $i$ to misreport $v_i$ to have higher (or lower) probability of being a sink (being a sink could be beneficial since she gets all the surplus). For example, the {\em irrelevant sink} mechanism given in \Cref{mech:irrel-sink} is {\em not} strategyproof.
\begin{algorithm}[H]
\caption{Irrelevant Sink Mechanism (not strategyproof)}
\label{mech:irrel-sink}
 \begin{algorithmic}[1]
 \STATE {\bf Input}: a valuation profile $v \in V$
  \FOR{agent $i$ in $N$}
    \STATE Define: $a^*(v_{-i}) = \argmax_{a \in A} \sum_{j \neq i} v_j(a)$
    \IF{$\sum_{j \neq i} v_j(a^*(v_{-i})) - \sum_{j \neq i} v_j(a) > M$ for all $a \in A \setminus \{a^*(v_{-i})\}$}
      \STATE Call $i$ an irrelevant agent
    \ENDIF
  \ENDFOR
  \IF{irrelevant agent is found}
    \STATE Arbitrarily pick one of them as a sink with probability $1$
  \ELSE
    \STATE Pick an agent $i$ with probability $\frac{1}{n}$ and treat as sink
  \ENDIF
 \end{algorithmic}
\end{algorithm}
The intuition of this mechanism is that if agent $i$'s maximum valuation sweep ($-M/2$ to $M/2$) cannot change the alternative, this {\em irrelevant} agent can be selected as a sink, which yields the efficient alternative. However, when there is no such irrelevant agent, the decision of choosing every agent equi-probably leads to a chance of manipulation. An agent whose true valuation report does not lead her to become a sink can misreport a valuation so that there is no irrelevant agent, thereby increasing her own probability of being selected as a sink. For example, consider two valuation profiles with three agents (numbered $1,2,3$) and three alternatives ($a,b,c$), and $M = 1$. The agents' valuations in the first profile are $v_1 =  (0.5,  0,  -0.5), v_2 = (-0.5,  0, 0.5), v_3 = (0,  -0.5,  0.5)$,
and in the second profile they are $v_1' = (0.5,  0,  -0.5), v_2' = (-0.5,  0,   0.5), v_3' = (-0.5,  0,  0.5)$. The mechanism returns agent $1$ as the irrelevant agent in the first profile and therefore picks alternative $c$ with probability 1. There is no irrelevant agent in the second profile and hence each agent is picked as a sink with uniform probability, leading to the probability vector $(2/3,0,1/3)$ for the alternatives $a,b,c$. But agent 3 strictly gains by moving from the first profile to the second.\footnote{One can also verify that the weak monotonicity condition, which is a necessary condition for strategyproofness, is violated for agent 3 between these two profiles.}

A small modification of the previous mechanism leads to a strategyproof generalized sink mechanism. This shows that the class of generalized sink mechanisms is indeed richer than the constant probability sink mechanisms. In the modified version, we pick a default sink with a certain probability, which will be the sink if there exists no irrelevant agent among the rest of the agents. The change here is that when an agent is picked as a default sink, her valuation has no effect in deciding the sink. See \Cref{mech:mod-irrel-sink}.
\begin{algorithm}[H]
\caption{Modified Irrelevant Sink Mechanism (strategyproof)}
\label{mech:mod-irrel-sink}
 \begin{algorithmic}[1]
 \STATE {\bf Input}: a valuation profile $v \in V$
  \STATE Pick agent $i$ as a {\em default sink} with probability $p_i$
   \FOR{agent $j$ in $N \setminus \{i\}$}
    \IF{irrelevant agent(s) found within $N \setminus \{i\}$}
      \STATE Arbitrarily pick one of them as a sink
      \STATE Irrelevant agent is found
    \ENDIF
   \ENDFOR
   \IF{no irrelevant agent is found within $N \setminus \{i\}$}
    \STATE Treat agent $i$ as sink
   \ENDIF
 \end{algorithmic}
\end{algorithm}
It is easy to verify that this mechanism is strategyproof.
Interestingly, no generalized sink mechanism can improve the expected sample inefficiency over deterministic mechanisms if there are more alternatives than agents ($m>n$).
\begin{theorem}[Generalized Sink for $m>n$]
If $m>n$, every generalized sink mechanism has expected sample inefficiency $\geq \frac{1}{n}$.
\label{thm:gen-sink}
\end{theorem}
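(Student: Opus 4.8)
\noindent\emph{Proof proposal.} Recall the structure of a generalized sink mechanism (\Cref{alg:generalized-sink}): it is specified by a map $g:V\to\Delta N$, and on a profile $v$ it designates agent $i$ as the unique sink with probability $g_i(v)$, after which the selected alternative $a^*(v_{-i})$ lies in $\argmax_{a\in A}\sum_{j\ne i}v_j(a)$, i.e.\ it is efficient in the world without agent $i$. Consequently the expected sample inefficiency at $v$ equals $\mathrm{INEFF}(v)=\sum_{i\in N}g_i(v)\big(\max_{a\in A}\sum_{j\in N}v_j(a)-\sum_{j\in N}v_j(a^*(v_{-i}))\big)$. Since $\sum_ig_i(v)=1$, it suffices to produce, for every small $\epsilon>0$, a profile $v^\epsilon\in V$ at which \emph{each} of the $n$ terms in the sum is at least $M-O(\epsilon)$: that forces $\mathrm{INEFF}(v^\epsilon)\ge M-O(\epsilon)$ regardless of $g$, hence $\sup_v\mathrm{INEFF}(v)\ge M$ and $r_n^M(f)\ge 1/n$.

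The construction is where the hypothesis $m>n$ enters: I would reserve $n+1$ distinct alternatives $a_0,b_1,\dots,b_n$ and set every agent's value for any remaining alternative to $-M/2+\epsilon$, so that such alternatives are never welfare-maximizing (with or without any single agent) and can be disregarded. On the reserved alternatives, put $v_i^\epsilon(a_0)=c$ for all $i$, $v_i^\epsilon(b_i)=-e$, and $v_j^\epsilon(b_i)=d$ for $j\ne i$, where $d=e=M/2-\epsilon$ and $c=d-\epsilon/n$; a short check confirms $c,d,-e\in(-M/2,M/2)$. Two claims then finish the proof. (i) For each $i$, $b_i$ is the \emph{unique} maximizer of $a\mapsto\sum_{j\ne i}v_j^\epsilon(a)$: it beats $a_0$ because $(n-1)d>(n-1)c$, it beats each $b_k$ with $k\ne i$ because $(n-1)d-\big((n-2)d-e\big)=d+e>0$, and it beats every extra alternative trivially; hence whenever $i$ is the sink the mechanism is \emph{compelled} to output $b_i$, with no freedom left for tie-breaking. (ii) The inefficiency incurred when $i$ is the sink is at least $\sum_{j\in N}v_j^\epsilon(a_0)-\sum_{j\in N}v_j^\epsilon(b_i)=nc-\big((n-1)d-e\big)=M-3\epsilon$, where I used only $\max_{a\in A}\sum_{j\in N}v_j^\epsilon(a)\ge\sum_{j\in N}v_j^\epsilon(a_0)$.

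Putting these together, $\mathrm{INEFF}(v^\epsilon)\ge\sum_ig_i(v^\epsilon)(M-3\epsilon)=M-3\epsilon$ for every admissible $g$, and letting $\epsilon\downarrow0$ gives $\sup_{v\in V}\mathrm{INEFF}(v)\ge M$, i.e.\ $r_n^M(f)\ge 1/n$. Note that this argument uses $m>n$ only to secure $n+1$ distinct alternatives and never appeals to strategyproofness, so it covers the whole class ${\cal G}$; it also tacitly assumes $n\ge2$ so that the set of agents $j\ne i$ is nonempty, which is harmless since for $n=1$ budget balance already forces efficiency.

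The step I expect to be the crux is calibrating the magnitudes so that two opposing requirements hold at once: once agent $i$ is removed, $b_i$ must strictly beat $a_0$ (which needs $d>c$), while the \emph{global} welfare gap between $a_0$ and $b_i$ must simultaneously approach the full width $M$ of the valuation range (which pushes $c$ and $e$, and therefore $d$, toward $M/2$). These are reconcilable precisely because the slack $d-c$ can be shrunk at rate $\epsilon$ without affecting the limiting gap, which is exactly why the resulting bound is $1/n$ and not larger. Making the $\epsilon$-bookkeeping, the disposal of the extra alternatives, and the (here vacuous) tie-breaking in $a^*(v_{-i})$ fully precise is the only genuinely delicate part.
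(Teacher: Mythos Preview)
Your proof is correct and follows essentially the same construction as the paper's: both build a profile with one ``common good'' alternative ($a_0$ for you, $a_{n+1}$ in the paper) that everyone likes nearly maximally, together with $n$ ``trap'' alternatives $b_i$ (the paper's $a_i$) each of which agent $i$ alone dislikes, so that removing any agent $i$ forces the choice $b_i$ and incurs inefficiency approaching $M$. Your version is slightly more explicit about uniqueness of the maximizer, the disposal of the extra $m-(n+1)$ alternatives, and the fact that strategyproofness is never invoked, but the core idea and the arithmetic are the same.
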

The proof is critically dependent on $m>n$. However, we can hope for a smaller inefficiency if the number of alternatives is small. We state this intuition formally as follows.
\begin{theorem}[Increasing Inefficiency with $m$]
\label{thm:incr-inefficiency}
 For every mechanism $f$ and for a fixed number of agents $n$, the expected sample inefficiency is non-decreasing in $m$, i.e., $r_{n,m_1}^M(f) \geq r_{n,m_2}^M(f), \forall m_1 > m_2$.\footnote{We overload the notation for the expected sample inefficiency $r_n$ with $r_{n,m}$ to make the number of alternatives explicit.}
\end{theorem}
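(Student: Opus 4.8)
The plan is to reduce the claim to the elementary monotonicity of a supremum under a value-preserving embedding of the smaller instance into the larger one. Fix the number of agents $n$ and two alternative sets $A_2 \subset A_1$ with $|A_2| = m_2 < m_1 = |A_1|$, and write $E = A_1 \setminus A_2$ for the $m_1 - m_2$ extra alternatives. The goal is to show that every valuation profile over $A_2$ can be extended to a profile over $A_1$ on which $f$ incurs at least as much inefficiency; then $r_{n,m_1}^M(f)$, being the supremum of the same inner expression over a domain that contains (a copy of) the $m_2$-domain, dominates $r_{n,m_2}^M(f)$.

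First I would describe the embedding. Given $v \in V$ over $A_2$, define $\tilde v$ over $A_1$ by $\tilde v_i(a) = v_i(a)$ for every $a \in A_2$ and $\tilde v_i(e) = -\frac{M}{2} + \epsilon$ for every $e \in E$ and every $i \in N$, with $\epsilon > 0$ small (this respects the open interval $(-\frac{M}{2}, \frac{M}{2})$, the endpoint being approached in the limit). The purpose of valuing every extra alternative near the bottom of the interval is to make each $e \in E$ \emph{strictly dominated}: since each agent's value for any alternative of $A_2$ exceeds $-\frac{M}{2}$, we have $\sum_{i \in N} \tilde v_i(e) = -\frac{nM}{2} + n\epsilon < \sum_{i \in N} v_i(a)$ for every $a \in A_2$ once $\epsilon$ is small enough (the threshold depending on $v$). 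Hence the maximizer stays inside $A_2$ and $\max_{a \in A_1} \sum_{i} \tilde v_i(a) = \max_{a \in A_2} \sum_{i} v_i(a)$, so the first, optimal-welfare term of the inefficiency is unchanged by the embedding.

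Next I would track the realized welfare of $f$, which is the crux. Reading ``the same mechanism $f$ across different ground-set sizes'' in its natural sense, namely that $f$ is consistent under the addition of strictly dominated alternatives --- a property enjoyed by the affine-maximizer and efficient-among-available sink mechanisms that are the subject of this paper --- the allocation $f(\tilde v)$ coincides with $f(v)$ on $A_2$ and places no mass on $E$. Taking expectations for randomized $f$, the realized welfare and therefore the bracketed inefficiency term at $\tilde v$ equals that at $v$. Concluding by monotonicity of the supremum (with the common positive normalizer $\frac{1}{nM}$) then yields $r_{n,m_1}^M(f) \geq r_{n,m_2}^M(f)$ for all $m_1 > m_2$, i.e.\ the non-decreasing property.

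The main obstacle I expect is exactly this middle step: for a wholly arbitrary (non-consistent) $f$, the only way the inequality could fail is if $f(\tilde v)$ exploited the presence of $E$ to reallocate probability \emph{more} efficiently among the alternatives of $A_2$ than $f(v)$ does, thereby lowering inefficiency. The dominated-alternative construction is designed to neutralize this --- placing any mass on a dominated $e \in E$ only reduces realized welfare and thus \emph{raises} inefficiency, which is the direction we need --- and the consistency convention on $f$ rules out the remaining escape of a strictly better within-$A_2$ reallocation. I would therefore present the argument under that convention and additionally note the open-interval technicality, choosing $\epsilon$ per profile and passing to the limit $-\frac{M}{2}$ so that every constructed $\tilde v$ remains in $V$.
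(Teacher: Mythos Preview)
Your approach is essentially the same as the paper's: embed the $m_2$-profile into the $m_1$-profile by valuing the extra alternatives near $-M/2$, so that the efficient alternative is preserved and the supremum can only grow. The paper's proof is a two-line version of exactly this construction; you are simply more explicit about the consistency assumption on $f$ across alternative-set sizes, which the paper leaves implicit.
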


\Cref{thm:gen-sink,thm:incr-inefficiency} suggest that in order to minimize inefficiency, one must have a small number of alternatives. So from now on, we consider the extreme case with $m=2$, where we investigate the advantages of randomization.

For two alternatives, the following theorem shows that the na\"ive randomized sink (NRS) mechanism reduces the inefficiency by a factor of two.
\begin{theorem}[Na\"ive Randomized Sink]
 \label{thm:naive-random}
 For $m=2$, the expected sample inefficiency of the NRS mechanism is $\frac{1}{n^2} \left \lceil \frac{n}{2} \right \rceil \sim \frac{1}{2n}$.
\end{theorem}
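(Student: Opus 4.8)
The plan is to exploit that with $m=2$ everything collapses to a scalar computation. Write $A=\{a,b\}$ and, for a profile $v$, set $d_i := v_i(a)-v_i(b)\in(-M,M)$ and $D:=\sum_{i\in N}d_i$. The maximum social welfare is $\max\{\sum_i v_i(a),\sum_i v_i(b)\}$, so the efficiency loss from selecting the ``wrong'' alternative at $v$ is exactly $|D|$ (and $0$ when $D=0$, both alternatives then being efficient). By the $a\leftrightarrow b$ symmetry I may assume $D\ge 0$, and further $D>0$; the efficient alternative is then $a$. When NRS draws agent $j$ as the sink it outputs the efficient alternative for $N\setminus\{j\}$, which is $a$ iff $D-d_j>0$; hence it errs, incurring loss $D$, precisely on those $j$ with $d_j\ge D$ (if $D-d_j=0$ the tie rule may or may not cause an error, but this only shrinks the error set, never enlarges it past $k:=|\{j:d_j\ge D\}|$). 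Thus the expected loss of NRS at $v$ is at most $\frac{D}{n}\,k$, with equality when those $k$ agents all have $d_j>D$ strictly.

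\textbf{Upper bound.} Fix a profile with $D>0$ and $k\ge 1$. Each of the $k$ agents with $d_j\ge D$ also has $d_j<M$, forcing $D<M$. The other $n-k$ agents each have $d_j>-M$ and together sum to $D-\sum_{d_j\ge D}d_j\le(1-k)D$, which forces $(k-1)D<(n-k)M$ (vacuous for $k=1$). Hence $kD<M\,g(k)$ with $g(k):=\min\{k,\;k(n-k)/(k-1)\}$ and $g(1):=1$. The remaining step is the elementary estimate $g(k)\le\lceil n/2\rceil$ for all $k\in\{1,\dots,n\}$: with $u=k-1$, the quantity $k(n-k)/(k-1)$ becomes $(n-2)-u+(n-1)/u$, which is strictly decreasing in $u$ and exceeds $k$ exactly when $k\le(n+1)/2$; so $g(k)=k$ and is increasing up to $k=\lceil n/2\rceil$ (where $g=\lceil n/2\rceil$) and $g(k)<\lceil n/2\rceil$ beyond. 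Plugging back, the normalized expected loss $kD/(n^2M)$ is strictly below $\lceil n/2\rceil/n^2$ at every profile.

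\textbf{Matching lower bound.} I would exhibit the explicit family: $k=\lceil n/2\rceil$ agents with $d_j=D+\varepsilon$ and $n-k$ agents with common value $y$ fixed so that $\sum_i d_i=D$, i.e.\ $y=(D(1-k)-k\varepsilon)/(n-k)<0$; realize each $d_i$ by $v_i(a)=d_i/2$, $v_i(b)=-d_i/2$, which lies in $(-M/2,M/2)$ whenever $|d_i|<M$. I would check that for $D<M$ and $\varepsilon$ small both $|D+\varepsilon|<M$ and $|y|<M$ hold, the latter because $D(k-1)/(n-k)\le D$ for $k=\lceil n/2\rceil$. On such a profile each of the $k$ large agents is pivotal (loss $D$) while removing a small agent keeps $D-y>0$ (loss $0$), so the expected loss is exactly $kD/n$; letting $D\to M$ and $\varepsilon\to0$ drives the normalized value to $\lceil n/2\rceil/n^2$. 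Together with the upper bound this yields $r_n^M(f)=\frac{1}{n^2}\lceil n/2\rceil\sim\frac{1}{2n}$ for the NRS mechanism $f$.

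\textbf{Expected obstacle.} The only genuinely delicate point is the discrete optimization $\max_k g(k)=\lceil n/2\rceil$ in the second step, where the parity of $n$ must be tracked --- for odd $n$ the maximizer $k=(n+1)/2$ sits exactly on the crossover where $k=k(n-k)/(k-1)$, so both bounds on $kD$ are simultaneously tight there --- together with verifying that the ``$D-d_j=0$'' tie cases cannot let a profile beat the bound (they cannot, since they only enlarge the error set up to the same $k$). The rest is routine feasibility checking of the construction.
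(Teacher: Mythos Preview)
Your proposal is correct and arrives at the same result as the paper, but it is noticeably more rigorous on the upper-bound side.

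The paper's argument first bounds the loss contributed by any single sink by $M$, then restricts attention to ``pivotal'' agents whose individual gap $d_i=v_i(a)-v_i(b)$ is essentially $M$ (so that removing them flips the decision \emph{and} the resulting loss is near $M$), and shows that no more than $\lceil n/2\rceil$ such near-extremal pivotal agents can coexist. From this it concludes the $\frac{1}{n}\lceil n/2\rceil M$ bound. What is implicit there --- and not actually proved --- is that configurations with a smaller aggregate gap $D$ but \emph{more} pivotal agents cannot beat this bound. Your argument closes exactly this gap: you track the product $kD$ directly, derive the two constraints $D<M$ and $(k-1)D<(n-k)M$ from the feasibility of the $d_i$'s, package them as $kD<M\,g(k)$ with $g(k)=\min\{k,\,k(n-k)/(k-1)\}$, and then do the discrete optimization $\max_k g(k)=\lceil n/2\rceil$. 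This is the genuinely new piece relative to the paper, and it is what makes the upper bound airtight for all profiles, not just those with $D\approx M$.

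For the matching lower bound both you and the paper use essentially the same construction: $\lceil n/2\rceil$ agents with $d_j$ slightly above $D$ and the rest with a common negative $d$-value chosen to make the total exactly $D$, then let $D\uparrow M$. Your feasibility check ($|y|\le D(k-1)/(n-k)\le D<M$ for $k=\lceil n/2\rceil$, with a small $\varepsilon$-margin) is explicit where the paper simply asserts tightness; the parity remark you flag --- that for odd $n$ the maximizer $k=(n+1)/2$ sits on the crossover $k=k(n-k)/(k-1)$, forcing $|y|\to M$ simultaneously with $D\to M$ --- is exactly the place where care is needed, and you handle it correctly by taking the limit. The tie-breaking caveat is harmless in both directions, as you note.
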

Even though the modified irrelevant sink (MIS) mechanism (\Cref{mech:mod-irrel-sink}) is sophisticated in its use of the valuation profile, it is easy to check that even that mechanism yields the same inefficiency on the profile illustrated in the proof above. 
\begin{theorem}[Modified Irrelevant Sink]
 \label{thm:mod-irrel-sink}
 For $m=2$, the expected sample inefficiency of the MIS mechanism (\Cref{mech:mod-irrel-sink}) is at least $\frac{1}{n^2} \left \lceil \frac{n}{2} \right \rceil \sim \frac{1}{2n}$.
\end{theorem}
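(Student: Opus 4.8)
\noindent\emph{Proof plan.}\enspace The plan is to show that for every MIS mechanism there is a two-alternative valuation profile on which its expected welfare loss is arbitrarily close to $\tfrac{M}{n}\lceil n/2\rceil$; dividing by $nM$ then gives $r_n^M \ge \tfrac{1}{n^2}\lceil n/2\rceil$. An MIS mechanism (\Cref{mech:mod-irrel-sink}) is fully described by its default-sink distribution $p=(p_1,\dots,p_n)$, $\sum_i p_i = 1$, together with an arbitrary rule for breaking ties among irrelevant agents. First I would relabel the agents so that $p_1 \ge p_2 \ge \cdots \ge p_n$ and call agents $1,\dots,\lceil n/2\rceil$ \emph{pivotal} and the remaining $\lfloor n/2\rfloor$ agents \emph{ordinary}. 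The worst-case profile is a perturbation of the NRS worst-case profile from the proof of \Cref{thm:naive-random}: every pivotal agent receives a valuation close to $(\tfrac M2,-\tfrac M2)$ over $(a_1,a_2)$ and every ordinary agent a common valuation close to $(-\tfrac M2,\tfrac M2)$, scaled and shifted by two small slack parameters $\varepsilon,\delta$ with $\delta\ll\varepsilon$ so that, writing $d_i := v_i(a_1)-v_i(a_2)$ and $\Delta := \sum_i d_i$, we have $0<\Delta<M$, every pivotal agent is individually pivotal (that is, $d_i>\Delta$, so deleting any single pivotal agent flips the efficient alternative from $a_1$ to $a_2$, losing welfare $\Delta$), and $\Delta\to M$ as $\varepsilon,\delta\to 0$.

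The key steps, in order, would be: (i) rewrite the mechanism in terms of the gaps $d_i$ --- with $m=2$, agent $j$ is \emph{irrelevant within} $N\setminus\{i\}$ iff $|\Delta-d_i-d_j|>M$, and making agent $\ell$ the sink costs welfare $\Delta$ if $d_\ell>\Delta$ and $0$ otherwise; (ii) on the branch where the default sink $i$ is \emph{pivotal}, verify that no agent of $N\setminus\{i\}$ is irrelevant --- deleting a second pivotal agent leaves a partial gap whose absolute value is strictly below $M$ (this is exactly where $\delta$ is spent), and deleting an ordinary agent leaves a partial gap whose absolute value is strictly below $M$ as well (about $M\,\tfrac{\lceil n/2\rceil-1}{\lceil n/2\rceil}$ when $n$ is even) --- so the pivotal agent $i$ is itself the sink and the loss is exactly $\Delta$; (iii) on the branch where the default sink $i$ is \emph{ordinary}, verify the loss is $0$: either no irrelevant agent exists and the ordinary agent $i$ becomes the sink, or (when $n\ge 4$, so a second ordinary agent is present) some ordinary agent is found irrelevant and becomes the sink --- in either case the chosen alternative is the efficient one $a_1$, so there is no loss; (iv) conclude that the expected loss on this profile equals $\Delta\sum_{i\le\lceil n/2\rceil}p_i \ge \Delta\cdot\tfrac{\lceil n/2\rceil}{n}$, since the $\lceil n/2\rceil$ largest among nonnegative numbers summing to $1$ have sum at least $\tfrac{\lceil n/2\rceil}{n}$; letting $\varepsilon,\delta\to 0$ and dividing by $nM$ yields $r_n^M \ge \tfrac{1}{n^2}\lceil n/2\rceil \sim \tfrac{1}{2n}$, uniformly over $p$ and over the tie-breaking rule. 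The small cases $n\le 3$ (where $\lfloor n/2\rfloor\le 1$, so no ``second ordinary agent'' is ever available) I would check directly; they behave identically.

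The main obstacle is step (ii): the perturbation must be chosen so that, simultaneously, each pivotal agent is \emph{strictly} pivotal ($d_i>\Delta$), $\Delta$ is as close to $M$ as desired, and \emph{none} of the partial gaps obtained after deleting two agents --- one of them a pivotal default sink --- exceeds $M$ in absolute value. If this failed, the mechanism could route the sink role to a freshly discovered irrelevant agent precisely on the high-probability (pivotal-default-sink) branches, restoring efficiency there and collapsing the bound. Balancing strict pivotality and $\Delta\approx M$ against the two-deletion constraint is exactly why two independent slack parameters are needed; everything else is routine bookkeeping with the one-dimensional gaps $d_i$, together with the observation that deleting an ordinary agent only ever helps the mechanism.
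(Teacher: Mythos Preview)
Your proof is correct, but it works harder than necessary and misses the structural shortcut the paper uses. The paper does not verify irrelevance conditions on a specific profile at all; instead it observes that, for $m=2$, the MIS mechanism and the plain ``default-sink-only'' mechanism select the \emph{same alternative on every profile}. The reason: if the default sink is $i$ and an irrelevant agent $j\in N\setminus\{i\}$ is found, then $|\Delta-d_i-d_j|>M$ together with $|d_i|,|d_j|<M$ forces $\Delta-d_i$ and $\Delta-d_j$ to have the same sign, so the alternative efficient for $N\setminus\{j\}$ coincides with the one efficient for $N\setminus\{i\}$. Hence whether or not an irrelevant agent is discovered, the chosen alternative is the one efficient for $N\setminus\{\text{default sink}\}$, and the lower bound follows immediately from the NRS profile of \Cref{thm:naive-random}.

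This means your ``main obstacle''---step~(ii), tuning $\varepsilon,\delta$ so that no two-deletion gap exceeds $M$ when a pivotal default sink is drawn---is in fact a non-obstacle: even if such an irrelevant agent \emph{were} found, the mechanism would still output the same inefficient alternative $a_2$, not ``restore efficiency'' as you feared. Your delicate balancing of strict pivotality against the two-deletion constraint is therefore unnecessary, though it does go through. What your route buys is explicit treatment of a general default-sink distribution $(p_1,\dots,p_n)$: by assigning the pivotal role to the $\lceil n/2\rceil$ agents with largest $p_i$ and using $\sum_{i\le\lceil n/2\rceil}p_i\ge\lceil n/2\rceil/n$, you cover all MIS mechanisms, whereas the paper's discussion is phrased for uniform $p_i$. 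The paper's structural observation, however, extends verbatim to arbitrary $p$ and yields the same bound with less bookkeeping.
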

Furthermore, it turns out that NRS and MIS have the same inefficiency on every valuation profile. Both mechanisms choose a single agent as a sink. The default sink for MIS is chosen uniformly at random, identical to the choice of the sink for NRS. If there does not exist an irrelevant sink in the rest of the agents, the inefficiency remains the same as that for the default sink, which is identical to the inefficiency of NRS for that choice of sink. But even if an irrelevant sink exists, by the construction of the irrelevant sink, the resulting alternative is the efficient alternative for the agents except the default sink. This outcome would have resulted even if the default sink was chosen as the sink. Therefore, the inefficiencies in MIS and NRS mechanisms are the same.

The above result does not say much about the lowest achievable expected sample inefficiency (even in this special class of generalized sink mechanisms). 
In order to understand the limit of lowest achievable inefficiency for randomized mechanisms, we take an optimization-based approach. However, to keep the analysis simple and tractable, we focus on the special case of two agents and two alternatives.
Our next result gives a lower bound on the inefficiency for the class of generalized sink mechanisms in that setting.
Since we now fix the number of agents in the analysis, minimizing the expected sample inefficiency is equivalent to minimizing the expected {\em absolute} inefficiency given by $n r_n^M(f)$ which is $\frac{1}{M} \sup_{v \in V} \left\{ \mathbb{E}_{f(v)} \left[ \max_{a \in A} \sum_{i \in N} v_i(a) - \sum_{i \in N} v_i(f(v)) \right] \right\}$.
Without loss of generality we will assume $M=1$. For the rest of this section, we let `inefficiency' mean the expected absolute inefficiency.
\begin{theorem}[Lower Bound of Generalized Sink]
\label{thm:LB-gen-sink}
 For $n = m = 2$, the expected absolute inefficiency of every strategyproof generalized sink mechanism is lower bounded by $\frac{1}{2}$.
\end{theorem}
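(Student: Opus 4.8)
The plan is to exploit a feature that is special to $n=2$: \emph{every} generalized sink mechanism has all payments identically zero. Indeed, when agent $i$ is picked as the sink the unique other agent $j$ is charged her Clarke tax ``in the world without $i$''; but in that world there are no agents other than $j$ herself, so both the pivot term and the realized-welfare term in her Clarke tax are empty sums, whence $p_j=0$. The surplus handed to the sink is therefore $0$, so $p_1\equiv p_2\equiv 0$. Consequently strategyproofness (\Cref{def:strategyproofness}) collapses to the requirement that each agent's report maximizes her expected valuation of the randomly chosen alternative, for every fixed report of the other agent.

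Next I would record what this forces. With $m=2$ alternatives $\{a_1,a_2\}$ only the gap $d_i:=v_i(a_1)-v_i(a_2)\in(-M,M)=(-1,1)$ matters for agent $i$, and the mechanism is described by $P(d_1,d_2):=\Pr[f(v)=a_1]$. The zero-payment strategyproofness condition applied to agent $1$ says that for $d_1>0$ one has $P(d_1,d_2)=\sup_{d_1'}P(d_1',d_2)$ and for $d_1<0$ one has $P(d_1,d_2)=\inf_{d_1'}P(d_1',d_2)$; hence $P(\cdot,d_2)$ is constant on $\{d_1>0\}$ and constant on $\{d_1<0\}$, and symmetrically in $d_2$. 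Combining the two, $P$ is \emph{constant on each of the four sign-quadrants} of $(d_1,d_2)$. In the quadrant $d_1>0>d_2$ (agent $1$ prefers $a_1$, agent $2$ prefers $a_2$) the sink structure gives $P=g_2(v)$, since only agent $2$ being the sink yields $a_1$; call this common constant $P^\star\in[0,1]$.

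Finally I would plug two one-parameter families into that quadrant. On $\{d_1>0>d_2\}$ the efficient alternative is $a_1$ iff $s:=d_1+d_2>0$, and the expected inefficiency at such a profile is $(1-P^\star)\max(s,0)+P^\star\max(-s,0)$. Taking $v_1=(\tfrac12-\varepsilon,-\tfrac12+\varepsilon)$, $v_2=(-\varepsilon,\varepsilon)$ gives $s\to1^-$ and inefficiency $\to 1-P^\star$; taking $v_1=(\varepsilon,-\varepsilon)$, $v_2=(-\tfrac12+\varepsilon,\tfrac12-\varepsilon)$ gives $s\to-1^+$ and inefficiency $\to P^\star$. Hence the supremum over $v$ of the expected absolute inefficiency is at least $\max(P^\star,\,1-P^\star)\ge\tfrac12$, which is the claim. (The bound is tight: the na\"ive randomized sink mechanism has $g_1=g_2=\tfrac12$, so $P^\star=\tfrac12$ and its supremum inefficiency is exactly $\tfrac12$.)

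The only genuinely delicate point is the first one: recognizing that for $n=2$ the generalized-sink payment scheme degenerates to the zero vector, which is exactly what makes strategyproofness bite hard enough to pin $P$ down to a single constant per quadrant; after that the argument is bookkeeping. I would also sanity-check the boundary reports with some $d_i=0$ (ties in an agent's valuation), but these occur only on a measure-zero set and are harmless since the quantity of interest is a supremum attained in a limit, so they do not affect the bound.
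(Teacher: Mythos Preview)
Your proposal is correct and follows essentially the same approach as the paper: both observe that for $n=2$ the Clarke-tax payments vanish, use strategyproofness with zero payments to pin the sink-picking probability (equivalently, your $P$) to a single constant on the ``opposing'' profiles, and then read off the $\tfrac12$ bound. The only cosmetic difference is that you parameterize by $P=\Pr[a_1]$ and the gaps $d_i$, whereas the paper argues directly in terms of $g$ via a two-step transition $(v_1,v_2)\to(v_1',v_2)\to(v_1',v_2')$; these are equivalent formulations of the same quadrant-constancy argument.
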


\subsection{Unrestricted randomized mechanisms}

We now move on to study optimal randomized mechanisms without restricting attention necessarily to generalized sink mechanisms.
%
Finding a mechanism that achieves the minimum absolute inefficiency can be posed as the following optimization problem.
\begin{equation}
\label{eq:optimization}
 \begin{aligned}
  &\underset{f,\mathbf{p}}{\min}
  && \sup_{v \in V} \left[ \max_{a \in A} \sum_{i \in N} v_i(a) - \sum_{i \in N} v_i(f(v)) \right] \\
  &\text{s.t.}
  &&v_i(f(v_i,v_{-i})) - p_i(v_i,v_{-i})
  \\
  &&& \qquad
  \geq v_i(f(v_i',v_{-i})) - p_i(v_i',v_{-i}), \ \forall v_i, v_i', v_{-i}, \forall i \in N \\
  & && \sum_{a \in A} f_a(v) = 1, \ \forall v \in V, \\
  & && \sum_{i \in N} p_i(v) = 0, \ \forall v \in V, \\
  & && f_a(v) \geq 0, \ \forall v \in V, a \in A.
 \end{aligned}
\end{equation}
The objective function denotes the absolute inefficiency. The first set of inequalities in the constraints denote the strategyproofness requirement, where the term $v_i(f(v)) = v_i \cdot f(v)$ denotes the expected valuation of agent $i$ due to the randomized mechanism $f$. The second and last set of inequalities ensure that the $f_a(v)$'s are valid probability distributions, and the third set of inequalities ensure that the budget is balanced. The optimization is over the social choice functions $f$ and the payments $\mathbf{p}$, where the $f$ variables are non-negative but the $p$ variables are unrestricted.
Clearly, this is a linear program (LP), which has an uncountable number of constraints (because the equalities and inequalities have to be satisfied at all $v \in V$, which are the profiles of valuation functions mapping alternatives to an open interval).
We address this optimization problem using finite constrained optimization techniques by discretizing the valuation levels. We assume that each agent's valuations are uniformly discretized with $k$ levels in $[-M/2,M/2]$, which makes the set of valuation profiles $V$ finite. The optimal value of such a discretized relaxation of the constraints provides a lower bound on the optimal value of the original problem. This is because the discretized relaxation of the valuations only increases the feasible set since some of the constraints are removed, that is, more $f$'s and $p$'s satisfy the constraints, allowing a potentially lower value to be achieved for the minimization objective.
We now prove a lower bound when the number of discretized levels is three. The analysis uses a primal-dual argument on the discrete relaxation of the problem.
\begin{theorem}[Lower Bound of Inefficiency for Randomized Mechanisms]
\label{thm:random-opt-lower}
 For $n = m = 2$, and for $k=3$ discrete levels of valuations, the absolute inefficiency is lower bounded by $\frac{1}{7} = 0.142857$.
\end{theorem}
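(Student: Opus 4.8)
The plan is to work directly with the finite linear program produced by the three‑level discretization and to certify the lower bound by linear‑programming duality. First I would make the instance fully explicit. With $M=1$ each agent's valuation of each alternative lies in $\{-\tfrac12,0,\tfrac12\}$, so each agent has $3^2=9$ types and there are $81$ valuation profiles; using $f_{a_2}(v)=1-f_{a_1}(v)$ and, from budget balance, $p_2(v)=-p_1(v)$, the only free variables are $x(v):=f_{a_1}(v)\in[0,1]$ and $p_1(v)\in\mathbb{R}$ for each profile $v$, together with the scalar objective variable $t$. The inefficiency constraint at $v$ becomes $t\ge\big(\max_{a}\sum_i v_i(a)\big)-\big(x(v)\sum_i v_i(a_1)+(1-x(v))\sum_i v_i(a_2)\big)$, which is linear in $x(v)$, and each strategyproofness inequality is linear in the $x(\cdot)$'s and $p_1(\cdot)$'s. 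Since we only need a lower bound on $\min t$, it suffices by weak duality to exhibit one feasible point of the dual of this minimization LP whose value equals $\tfrac17$.

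Concretely, I would look for the dual certificate as a nonnegative combination of a small number of primal constraints that eliminates all the $x$ and $p$ variables and leaves an inequality of the form $c\,t\ge c/7$. The device for killing the payments is the standard telescoping of strategyproofness inequalities: for a fixed type of one agent, summing the inequalities $v_i(f(v_i,v_{-i}))-v_i(f(v_i',v_{-i}))\ge p_i(v_i,v_{-i})-p_i(v_i',v_{-i})$ around a closed cycle of the other argument makes the payment terms cancel. Here budget balance ($p_2=-p_1$) couples the two agents, so the payment‑cancellation condition is precisely a signed circulation on the graph whose nodes are profiles and whose edges are single‑agent type changes. Adding the resulting $f$‑only (cycle‑monotonicity) inequalities to suitably weighted inefficiency constraints at the relevant profiles, and invoking $x(v)\le 1$ where needed, should collapse to the claimed numeric bound. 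I expect the binding profiles to be (up to neutrality and relabeling) those where the two agents hold opposed strict preferences together with some intermediate $0$‑valuations, since those are exactly the profiles where either alternative leaves a strictly positive social‑welfare gap while the truthfulness constraints are tightest.

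The main obstacle is finding the correct certificate: with $81$ profiles and on the order of hundreds of strategyproofness constraints, guessing by hand which inequalities to combine and with which rational weights is delicate. I would resolve this computationally, in the spirit of automated mechanism design: solve the discretized LP numerically to obtain its optimal value and an optimal dual vector, read off the (small) support of that dual solution, and then re‑derive the corresponding combination symbolically to produce an exact rational certificate of value $\tfrac17$. The remaining work is a routine but careful verification — check that every multiplier on an inequality constraint is nonnegative, that the multipliers on the equality constraints (probability normalization and budget balance) are consistent (they may be of either sign), that all $x$‑ and $p$‑coefficients cancel, and that the right‑hand‑side constant is exactly $\tfrac17$ times the coefficient of $t$. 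Because the discretization only enlarges the feasible set relative to the open‑interval problem, this $\tfrac17$ bound transfers automatically to the continuous setting, though it need not be tight there; finer discretizations raise the lower bound toward the true optimal inefficiency and drive the improvement factor over the best deterministic mechanism below $5$, as reported in \Cref{fig:lower-bound}.
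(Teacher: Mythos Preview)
Your proposal is correct and follows essentially the same route as the paper: formulate the discretized problem as a finite LP and certify the lower bound of $\tfrac17$ via weak duality by exhibiting a feasible dual point, with the certificate to be found by solving the LP numerically and then verified exactly. The paper does exactly this, the only cosmetic differences being that it reduces variables via an anonymity/neutrality symmetry lemma rather than your direct substitutions $f_{a_2}=1-f_{a_1}$ and $p_2=-p_1$, and that it writes out the explicit rational dual multipliers $(\lambda,\gamma,\mu,\delta)$ and checks the objective value $\tfrac17$, whereas you leave that step to computation.
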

%
%
\begin{figure}[h!]
 \centering
 \includegraphics[width=0.6\linewidth]{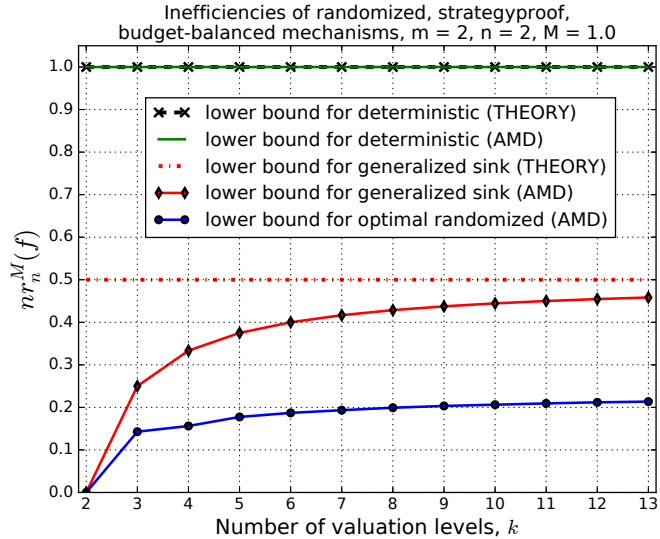}
\caption{Lower bound for the discrete relaxation of the inefficiency minimization LP.}
\label{fig:lower-bound}
\end{figure}
The proof technique can be extended to a larger number of discrete levels to obtain a tighter lower bound on the actual inefficiency. We conducted a form of automated mechanism design~\citep{conitzer2002complexity,sandholm2003automated} by solving this LP using \citet{gurobi} for increasing values of $k$. We apply the same optimization-based approach for generalized sink and the deterministic cases as well, even though for these cases we have theoretical bounds. The solid lines in \Cref{fig:lower-bound} show the optimization-based results (denoted as AMD) and the dotted lines show the theoretical bounds. Note that for deterministic case, the theoretical and optimization-based approaches overlap since the inefficiency is unity even with two valuation levels. The convergence of the optimization-based approach for generalized sink mechanism shows the efficacy of the approach and helps to predict the convergence point for the optimal randomized mechanism. One can see that the lower bound is greater than $0.2$ for the optimal mechanism, but it seems to converge to a value much lower than $0.5$.

\section{Experiments with real data}
\label{sec:experiments}

Even though the na\"ive randomized sink (NRS) (\Cref{def:nrs}) 
mechanism gives a worst-case sample inefficiency between $1/2n$ (for $2$ alternatives, \Cref{thm:naive-random}) and $1/n$ (for more alternatives than agents, $m > n$, \Cref{thm:gen-sink}), in this section we investigate its average and worst-case performances on real datasets of user preferences. Going back to the example of movie selection by a group of friends (\Cref{sec:intro}), we consider several sizes of the group. A small group consists of tens of friends, while if the decision involves screening a movie at a school auditorium, the group size could easily be in the hundreds. This is why we consider group sizes spanning from 10 to 210 in steps of 50. 

A similar situation occurs when a group of people decides which comedian/musician to invite in a social gathering, where they need to pay the cost of bringing the performer.

Keeping these motivating situations in mind, we used two datasets that closely represent the scenarios discussed. We used the MovieLens $20$M dataset~\citep{harper2016movielens} and the Jester dataset~\citep{goldberg2001eigentaste} to compare the performance of the two mechanisms with their worst case bounds. The first dataset contains preferences for movies, while the second contains preferences for online jokes. The MovieLens $20$M dataset ({\tt ml-20m}) describes users' ratings between 1 and 5 stars from MovieLens, a movie recommendation service. It contains 20,000,263 ratings across 27,278 movies. These data were created from the ratings of 138,493 users between January 09, 1995 and March 31, 2015. For our experiment, we sampled the preferences of a specific number of users (shown as agents on the x-axis of \Cref{fig:real-data}) multiple times uniformly at random from the whole set of users that rated a particular genre of movies, and computed the sample inefficiency on this sampled set and plotted the average expected sample inefficiency and standard deviation.

The Jester dataset ({\tt jester-data-1}) used in our experiment contains data from 24,983 users who have rated 36 or more jokes, a matrix with dimensions 24983 X 100, and is obtained from Jester, an online joke recommendation system.\footnote{In both datasets there are missing values because a user has typically not rated all movies/jokes. Before our experiment, we filled the missing values with a random realization of ratings drawn from the empirical distribution for that alternative (movie or joke).
The empirical distribution of an alternative is created from the histogram of the available ratings of the users.
We cleaned the dataset by keeping only those alternatives that have at least 10 or more available ratings and filled the rest using their empirical distributions.}

\Cref{fig:real-nrs} shows that the real preferences of users yield much lower expected sample inefficiencies for the na\"ive randomized sink (NRS) mechanism than the theoretical worst-case guarantee. The improvement ranges from roughly a factor of 5 (for a group size of 10) to almost 100 (for a group size of 210). This also indicates that the rate of decay of the inefficiency with the size of the group is faster than the theoretical guarantee. The experiment is run for different sizes of the agent groups with a random group of each size drawn multiple times from the dataset. The bars in \Cref{fig:real-nrs} shows the {\em average} expected sample inefficiencies of the mechanism with the standard deviations around them. Since NRS is a randomized mechanism, it is also worth looking at its worst-case performance on the same sampled datasets. \Cref{fig:real-worst} shows the {\em average} worst-case sample inefficiency along with its standard deviation. The line plot in both the figures shows the worst-case expected sample inefficiency for NRS for two alternatives.

\begin{figure}[h!]
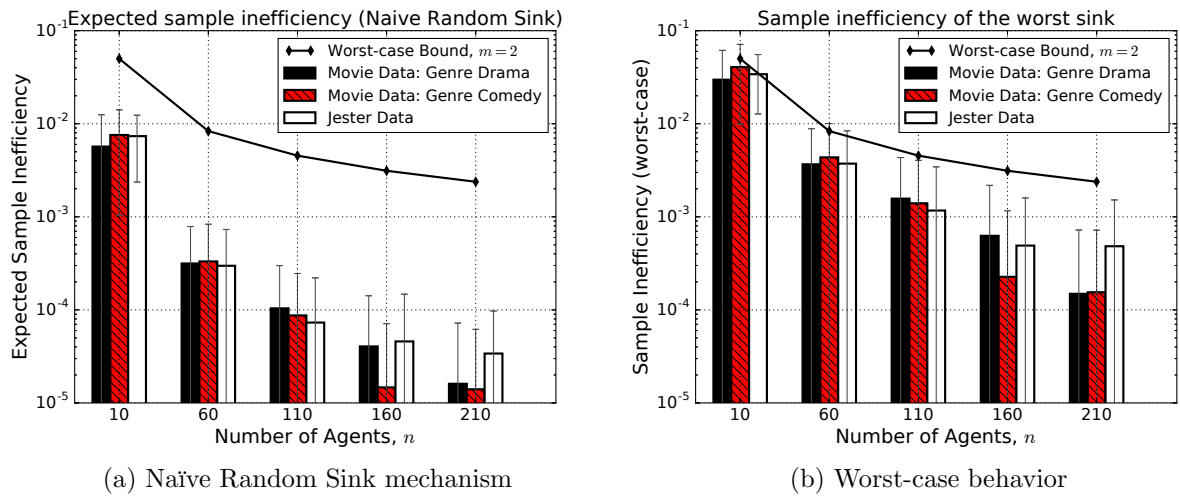

\centering
\begin{subfigure}[b]{0.5\linewidth}
  \centering
  \includegraphics[width=\linewidth]{data_and_worst-case_comparison_naive_random_sink.eps}
  \caption{Na\"ive Random Sink mechanism}
  \label{fig:real-nrs}
\end{subfigure}%
\hspace{0cm}%
\begin{subfigure}[b]{0.5\linewidth}
  \centering
  \includegraphics[width=\linewidth]{data_and_worst-case_comparison_worst_sink.eps}
  \caption{Worst-case behavior}
  \label{fig:real-worst}
\end{subfigure}
\caption{Performance of na\"ive randomized sink mechanism on MovieLens and Jester datasets.}
\label{fig:real-data}
\end{figure}
By the arguments following \Cref{thm:mod-irrel-sink}, we know that the inefficiency of the modified irrelevant sink (MIS) (\Cref{mech:mod-irrel-sink}) will be same as NRS. Since the MIS mechanism also picks exactly one sink, the worst-case behavior of a single sink illustrated above also applies to the MIS mechanism.

\section{Summary and future research}
\label{sec:concl}

In this paper, we provided several new results on the classic question of the interplay between efficiency and budget balance, properties that are incompatible with strategyproofness due to the Green-Laffont impossibility result, in the general quasi-linear setting.
We sought to understand the limits of minimal compromise between these two properties, both in the context of deterministic and randomized mechanism design framework.

We proved characterization results, and a tight lower bound for inefficiency, for deterministic budget-balanced mechanisms. We also proved that for unrestricted valuations, minimizing inefficiency and budget imbalance together does not provide any asymptotic advantage in the deterministic paradigm over requiring budget balance and minimizing inefficiency.

We proved that randomization helps---particularly when the number of alternatives is small compared to the number of agents. Motivated by our result for deterministic mechanisms that shows that a strategyproof, budget-balanced, neutral mechanism must include a sink, we introduced the class of {\em generalized sink mechanisms} which is a general (and adaptive to the valuation profile) way of picking sink agents. We showed that there exists strategyproof non-trivial mechanism (modified irrelevant sink) in this class that reduces the worst-case inefficiency by a factor of 2. We used an automated mechanism design approach for two agents and showed analytically that an optimal randomized mechanism offers further reduction in the inefficiency.

Experiments with real data from two applications compare the na\"ive randomized sink with its theoretical worst-case upper bound. We see that the mechanism perform well in practice and yield very little inefficiency ($\sim 1\%$ to $0.01\%$ depending on the group size).
This inefficiency is 5--100 times smaller than the worst case, and this relative difference increases with the number of agents.
We also consider the worst-case realization of this mechanism, and found that the sample inefficiency is close to the worst-case expected sample inefficiency of the mechanism.

Future research includes studying the structure of the optimal randomized mechanisms
that achieve the (theoretical) improved efficiency. Future work also includes investigating the rate of improvement of the optimal bound for a general number of agents.





\pagebreak

\appendix

\section*{Appendix}
\setcounter{section}{1}

\subsection*{Proof of \Cref{thm:sink}}
\begin{proof}
 Consider the class of deterministic, strategyproof, and neutral mechanisms. \citet{mishra2012roberts} have shown that in the domain $V$, an allocation that satisfies the properties above must be a neutral affine maximizer (\Cref{def:affine-max}), that is, there exists $w_i \geq 0, \forall i \in N$, not all zero, such that,
 \begin{equation}
 \label{eq:affine-max}
  \text{$f(v) \in \argmax_{a \in A} \sum_{i \in N} w_i v_i(a).$}
 \end{equation}
 Additionally, the result by \citet{rockafellar1997convex} and \citet{krishna2001convex} states that for any convex type space, if the valuations are linear in type, then a strategyproof allocation satisfies revenue equivalence (\Cref{def:revenue-eq}). In our setting, the types of the agents are their valuations, which implies, trivially, that the valuations are linear in type. Also, they are drawn from the interval $(-\frac{M}{2}, \frac{M}{2})$, which is convex. So, revenue equivalence holds for the allocations in our setting. The following payment implements the affine maximizer allocation $f$ given by \Cref{eq:affine-max}:
 \begin{equation}
  \label{eq:payment-rev-eq}
  \text{$p_i(v_i,v_{-i}) = \left \{ \begin{array}{ll}
                               - \frac{1}{w_i} \left( \sum_{j \neq i} w_j v_j(f(v)) \right), & w_i > 0 \\
                               0, & w_i = 0
                              \end{array}
                              \right.$}
 \end{equation}
 for all $i \in N$. Since revenue equivalence holds in this setting, we conclude that any payment $\hat{p}_i, i \in N$ that makes $\langle f, \mathbf{\hat{p}} \rangle$ strategyproof, will be different from the above mentioned payments $\mathbf{p}$ by an additive factor $h_i(v_{-i})$ for each agent $i$ in every valuation profile.

 Now, we turn to proving the result of the theorem. We have the functional form of deterministic, strategyproof, neutral mechanisms given by \Cref{eq:affine-max}. If, on this class of mechanisms, we show that one cannot have weights $w_i > 0$ for all $i \in N$ while imposing budget balance, then we are done. This is because, if there exists one agent $i \in N$, for which $w_i = 0$, that agent is a {\em sink} agent as her valuations are never used by the social choice function and she is charged no payment.
 By revenue equivalence, any other payment that can implement the same allocation $f$ is $h_i(v_{-i})$. Putting this in the budget balance equation, we get $h_i(v_{-i}) = - \sum_{j \in N \setminus \{i\}} p_j(v)$, that is, she receives the payments made by the other agents. Thus agent $i$ is a sink agent. Hence, the proof is completed by proving the following claim.

 \begin{lemma}[Existence of $w_i=0$ Agent]
  \label{lemma:sink}
  A budget balanced mechanism $\langle f, \mathbf{p} \rangle$, where $f$ is a neutral affine maximizer on the domain $V$, must have at least one agent $i$ that has $w_i = 0$.
 \end{lemma}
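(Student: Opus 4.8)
The plan is to turn budget balance into a functional equation in the correction terms $h_i$ and then contradict the assumption that every $w_i$ is positive. By the first part of the proof, $f$ is a neutral affine maximizer, $f(v)\in\argmax_{a\in A}\sum_{i\in N}w_iv_i(a)$ with $w_i\ge 0$ not all zero, and revenue equivalence holds, so every payment making $f$ strategyproof equals the canonical one of \Cref{eq:payment-rev-eq} plus a term $h_i(v_{-i})$. Suppose, for contradiction, that $w_i>0$ for all $i\in N$. With $W(v)=\sum_jw_jv_j(f(v))$ the canonical payment is $p_i(v)=v_i(f(v))-W(v)/w_i$, so imposing $\sum_i\big(p_i(v)+h_i(v_{-i})\big)=0$ gives, for all $v\in V$,
\begin{equation}\label{eq:bbfunc}
 \sum_{i\in N}\beta_i\,v_i(f(v))+\sum_{i\in N}h_i(v_{-i})=0,\qquad \beta_i:=1-w_i\sum_{j\in N}\frac{1}{w_j}.
\end{equation}
Since $n\ge 2$ and all $w_i>0$, the $\beta_i$ are not all zero: $\sum_i\beta_i=n-\big(\sum_j 1/w_j\big)\big(\sum_jw_j\big)\le n-n^2<0$ by Cauchy--Schwarz. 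Fix an agent, say agent $1$, with $\beta_1\ne 0$.

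The engine is a local perturbation fact obtained by differencing \eqref{eq:bbfunc}. Because $h_i(v_{-i})$ does not depend on $v_i$, perturbing agent $k$'s value for a single alternative while leaving the selected alternative unchanged cancels $h_k$ and $f$, and yields: changing $v_k(a')$ for a non-selected $a'$ leaves $\sum_{i\ne k}h_i$ unchanged, while raising $v_k(a^*)$ for the selected $a^*=f(v)$ changes $\sum_{i\ne k}h_i$ by exactly $-\beta_k$ times the increment. Hence on any open region on which $f\equiv a^*$, with $v_{-k}$ fixed, the map $v_k\mapsto\sum_{i\ne k}h_i(v_k,v_{-k-i})$ is affine, equal to $-\beta_k\,v_k(a^*)+\text{const}$. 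For $n=2$ this already closes the argument and is the template for general $n$: here $\sum_{i\ne 1}h_i(v_{-i})=h_2(v_1)$ depends only on agent $1$'s valuation. Fix a small box $U_1$ of agent-$1$ valuations around the zero valuation; choosing agent $2$'s valuation to strongly favour $a_1$ makes $f\equiv a_1$ there, so $h_2(v_1)=-\beta_1v_1(a_1)+\gamma_1$ on $U_1$, while choosing it to strongly favour $a_2$ gives $h_2(v_1)=-\beta_1v_1(a_2)+\gamma_2$ on the \emph{same} $U_1$. Subtracting, and using that $v_1(a_1)$ and $v_1(a_2)$ range independently over $U_1$, forces $\beta_1=0$, contradicting $\beta_1=-w_1/w_2\ne 0$; hence some $w_i=0$.

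For $n\ge 3$ the obstruction — and, I expect, the technical heart of the argument — is that freezing agents $3,\dots,n$ leaves in \eqref{eq:bbfunc} a coupling term $\sum_{i\ge 3}h_i(v_1,v_2,\cdot)$ genuinely depending on both $v_1$ and $v_2$, so the clean $n=2$ separation fails. I would close the gap with a finite, continuity-free witness argument that quantitatively generalizes Green--Laffont: evaluate \eqref{eq:bbfunc} at the $2^n$ profiles $v^S$ obtained by toggling each agent $j$ between two fixed valuations $v_j,v_j'$ (agent $j$ reports $v_j$ iff $j\in S$), and form $\sum_{S\subseteq N}(-1)^{n-|S|}(\cdot)$. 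Every $h_i$-term cancels, since $h_i$ is constant along the toggle of agent $i$; so the identity forces $\sum_{S}(-1)^{n-|S|}\sum_i\beta_i\,v_i^{S}(f(v^{S}))=0$ for all choices of $v_j,v_j'$. Taking two alternatives with each $v_j$ (resp.\ $v_j'$) strongly favouring $a_1$ (resp.\ $a_2$), the selected alternative varies with $S$ and this alternating sum is a genuine piecewise-linear "kink" functional of the toggle data, which one shows to be nonzero for an appropriate choice — a direct check works for $n=2$, and an explicit \emph{asymmetric} choice (the two toggle values carrying unequal weight/magnitude) works for $n=3$; the same should go through for all $n$. This yields the contradiction. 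The main obstacle is precisely this step: choosing the $2^n$ toggle valuations so that the alternating sum of the selection-dependent terms is \emph{provably} nonzero for every $n$, since a mirror-symmetric choice ($a_1\leftrightarrow a_2$) cancels when $n$ is odd and the resulting combinatorial identity must be controlled uniformly in $n$.
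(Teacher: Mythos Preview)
Your $n=2$ argument is correct and is essentially the paper's: both of you difference the budget-balance identity across two profiles with different selected alternatives, obtaining an equation whose two sides depend on different agents' data. The framing differs slightly (you vary $v_2$ to switch the outcome and read off two incompatible affine formulas for $h_2(v_1)$; the paper varies $v_1$ and then perturbs $v_2$), but the content is the same. Incidentally, since $\beta_i=-w_i\sum_{j\ne i}1/w_j<0$ for every $i$, every $\beta_i$ is nonzero, so no selection of a ``special'' agent is needed.

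For $n\ge 3$ your alternating-sum (M\"obius) proposal is genuinely different from the paper's route, and the gap you yourself flag is real and is exactly where the work lies. Your cancellation of all $h_i$ in $\sum_{S}(-1)^{n-|S|}(\cdot)$ is correct, so the constraint $\sum_{S}(-1)^{n-|S|}\sum_i\beta_i\,v_i^S(f(v^S))=0$ does follow. But exhibiting toggle data for which this sum is \emph{provably} nonzero, uniformly in $n$, for \emph{arbitrary} positive weights $w_i$, and with every valuation confined to the bounded interval $(-M/2,M/2)$, is a nontrivial combinatorial problem. You already observed that the mirror-symmetric choice vanishes for odd $n$; an asymmetric choice may rescue $n=3$, but you give no construction for general $n$, and the dependence of the threshold structure of $f(v^S)$ on the unknown weights makes a uniform witness delicate. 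As written, the proof is incomplete for $n\ge 3$.

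The paper avoids this obstacle entirely by \emph{not} trying to cancel all $h_i$ at once. Working in the two-alternative model $\{0,1\}$, it singles out agent $n$ and a profile $v^{\{n\}}=(v_1+\delta,\dots,v_{n-1}+\delta,v_n)$ chosen so that $f(v^{\{n\}})=1$, but removing the $\delta$ from any subset of agents in $\{1,\dots,n-1\}$ flips the outcome to $0$. At every such flipped profile the value-term in your \eqref{eq:bbfunc} vanishes, so budget balance there expresses a single $h_i$ as minus a sum of other $h_j$'s. Substituting these relations back into the budget-balance equation at $v^{\{n\}}$ and iterating, every $h_i$ with $i\ne n$ is eventually replaced by sums of $h_n(\cdot)$ terms, none of which depend on $v_n$; meanwhile the value-part still contains the linear term in $v_n$. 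A small perturbation of $v_n$ (preserving the inequality regime) then breaks the equality. This sequential elimination is finite, constructive, works for every $n$ and every positive weight vector within the bounded domain, and never requires controlling a $2^n$-term combinatorial sum.
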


 \begin{proof}
 Suppose for contradiction that $w_i > 0, \forall i \in N$.
 Since $f$ is a neutral affine maximizer (\Cref{eq:affine-max}) and revenue equivalence holds in $V$ (\Cref{eq:payment-rev-eq}), we know that the payments are of the form $p_i(v_i,v_{-i}) = h_i(v_{-i}) - \frac{1}{w_i} \left( \sum_{j \neq i} w_j v_j(f(v)) \right), \forall v \in V, \forall i \in N$.

Additionally, since the mechanism $\langle f, \mathbf{p} \rangle$ is also budget balanced, we have
\begin{align}
 \text{$\sum_{i=1}^n \left( h_i(v_{-i}) - \frac{1}{w_i} \left( \sum_{j \neq i} w_j v_j(f(v)) \right) \right) = 0, \ \forall v \in V$} \nonumber \\
 \text{$\Rightarrow \quad \sum_{i=1}^n h_i(v_{-i}) - \sum_{i=1}^n  \left( \sum_{j \neq i} \frac{1}{w_j} \right) w_i v_i(f(v)) = 0, \ \forall v \in V.$} \label{eq:bb-sink-lemma}
\end{align}

  For an easier exposition, we first explain the proof technique for $n=2$. Later the same proof is generalized to any number of agents.

  By assumption, $w_1, w_2 > 0$. Pick two valuation profiles $(v_1^+, v_2)$ and $(v_1^-,v_2)$ such that the affine maximizer alternative in the first is $a_1$ while that in the second is $a_2$, that is,
  \begin{align}
   w_1 v_1^+(a_1) + w_2 v_2(a_1) &> w_1 v_1^+(a_2) + w_2 v_2(a_2) \label{eq:ineq-1} \\
   w_1 v_1^-(a_1) + w_2 v_2(a_1) &< w_1 v_1^-(a_2) + w_2 v_2(a_2) \label{eq:ineq-2}
  \end{align}
 This can be done by choosing $v_1^+(a_2) = v_1^-(a_2) = v_1(a_2)$ (say) small and $v_2$ to be small enough for both alternatives, so that the valuation of agent $1$ for $a_1$ determines the resulting alternative of $f$. Therefore, the RHS of the inequalities above are the same. Since the inequality of \Cref{eq:ineq-2} is strict, let the difference of the RHS and LHS be $\delta > 0$. The allocations at these two profiles are: $f(v_1^+, v_2) = a_1$ and $f(v_1^-, v_2) = a_2$. Since the payments satisfy revenue equivalence and budget balance, \Cref{eq:bb-sink-lemma} holds, which gives
 \begin{align*}
  - \frac{1}{w_1} w_2 v_2(a_1) + h_1(v_2) - \frac{1}{w_2} w_1 v_1^+(a_1) + h_2(v_1^+) &= 0 \\
  - \frac{1}{w_1} w_2 v_2(a_2) + h_1(v_2) - \frac{1}{w_2} w_1 v_1^-(a_2) + h_2(v_1^-) &= 0.
 \end{align*}
 Subtracting the first equation from the second and rearranging, we get
 \begin{equation}
  \label{eq:bb-equation}
  \frac{1}{w_1} w_2 (v_2(a_1) - v_2(a_2)) = \frac{1}{w_2} w_1 (v_1^-(a_2) - v_1^+(a_1)) - h_2(v_1^-) + h_2(v_1^+).
 \end{equation}
 Note that the RHS is independent of $v_2$. Therefore, if $v_2(a_1)$ is increased by a small amount ($< \delta / w_2$), both the inequalities given by \Cref{eq:ineq-1,eq:ineq-2} still hold, but \Cref{eq:bb-equation} fails to hold, which is a contradiction.

 The general proof of this lemma extends this idea to any number of agents $n \geq 2$.
 We prove this for a set of alternatives $A = \{0,1\}$. Consider this setting as that of a public project. In alternative $0$, the project is not undertaken, yielding every agent a value of zero, and when $1$ is chosen---i.e., the project is undertaken---the valuation of each agent is denoted by a single real number. This assumption helps us reduce the notational complexity. The proof, however, is completely general for any number of alternatives.

 Let the agents be numbered in decreasing order of their weights WLOG, that is, $w_i \geq w_{i+1}, i=1,2, \ldots, n-1$. We consider the following valuation profile: $(v_1 + \delta, v_2 + \delta, \ldots, v_{n-1} + \delta, v_n), \delta > 0$ such that
\begin{equation}
 \label{eq:sink-ineq}
 -\delta \sum_{i=1}^{n-1} w_i < \sum_{i=1}^{n} w_i v_i < -\delta \sum_{i=1}^{n-2} w_i
\end{equation}
The above inequalities imply that the affine maximizer alternative given by \Cref{eq:affine-max} for the profile mentioned above is $1$. However, if any agent $i$'s, $i = 1,2,\ldots,n-1$, valuation changes from $v_i + \delta$ to $v_i$, the alternative changes to $0$. We use a generic notation $v^S$ to denote this profile, where $S$ denotes the set of agents such that the valuations of all the agents $k \in S$ are $v_k$, and the valuations of the agents $j \notin S$ are $v_j + \delta$. Hence, $v^{\{n\}}$ is the profile mentioned before: $(v_1 + \delta, v_2 + \delta, \ldots, v_{n-1} + \delta, v_n)$ and $v^{\{n-1,n\}}$ is the profile: $(v_1 + \delta, v_2 + \delta, \ldots, v_{n-1}, v_n)$, for example.

Since, $f(v^{\{n\}}) = 1$, from \Cref{eq:bb-sink-lemma} we have,
\begin{equation}
 \label{eq:bb-mod-1}
 \left( \sum_{i=1}^{n-1} h_i(v^{\{n\}}_{-i}) + h_n(v^{\{n\}}_{-n}) \right) - \left( \sum_{i=1}^n  \left( \sum_{j \neq i} \frac{1}{w_j} \right) w_i v_i + \sum_{i=1}^{n-1}  \left( \sum_{j \neq i} \frac{1}{w_j} \right) w_i \delta \right) = 0.
\end{equation}
The idea of the proof is to make a series of substitutions in the first parentheses of the expression above, leaving the terms in the other parentheses unchanged. Note that, the expression in the second parentheses depends on $v_n$, while the expression $h_n(v^{\{n\}}_{-n})$ does not. The substitutions sequentially eliminate the dependency on $v_n$ from all the terms in the first parentheses, similar to what we did in the two agent case before. This leads to a contradiction, since $v_n$ can be perturbed
arbitrarily small so that it continues to satisfy the inequalities of \Cref{eq:sink-ineq}, our only assumed condition, but violates the equality in \Cref{eq:bb-mod-1}.

The substitutions will involve the term $\sum_{i=1}^{n-1} h_i(v^{\{n\}}_{-i})$ in the first parentheses of \Cref{eq:bb-mod-1}. Consider the profiles $v^{\{j,n\}}, j = 1, \ldots, n-1$. In each of these profiles, $f(v^{\{j,n\}}) = 0$ (due to the choice of $v^{\{n\}}$ in \Cref{eq:sink-ineq}). Hence,
\begin{equation}
 \label{eq:bb-mod-2}
 \sum_{i=1}^{n-1} h_i(v^{\{j,n\}}_{-i}) + h_n(v^{\{j,n\}}_{-n}) = 0, \ \forall j \in \{1,\ldots, n-1\}.
\end{equation}
Note that $v^{\{i,n\}}_{-i} = v^{\{n\}}_{-i}$. Hence, we can substitute terms from \Cref{eq:bb-mod-2} to the terms in the first parentheses of \Cref{eq:bb-mod-1} to get,
\begin{gather}
 \label{eq:bb-mod-3}
 \left( - \sum_{i=1}^{n-1} \sum_{j \neq \{i,n\}} h_j(v^{\{i,n\}}_{-j}) - \sum_{j \neq n} h_n(v^{\{j,n\}}_{-n}) + h_n(v^{\{n\}}_{-n}) \right) - \left( \sum_{i=1}^n  \left( \sum_{j \neq i} \frac{1}{w_j} \right) w_i v_i + \sum_{i=1}^{n-1}  \left( \sum_{j \neq i} \frac{1}{w_j} \right) w_i \delta \right) = 0.
\end{gather}
We continue replacing the terms $h_j(v^{\{i,n\}}_{-j})$ in the first summation of the first parentheses above. All other terms in that parentheses are $h_n$ functions and, therefore, are independent of $v_n$. For every $i \neq n$, consider the valuation profiles $v^{\{j,i,n\}}, j \neq i, n$. By \Cref{eq:sink-ineq}, $f(v^{\{j,i,n\}}) = 0$, so we get an equality similar to \Cref{eq:bb-mod-2}: $\sum_{k=1}^{n-1} h_k(v^{\{j,i,n\}}_{-k}) + h_n(v^{\{j,i,n\}}_{-n}) = 0, \ \forall j \neq i,n.$
Also, $v^{\{j,i,n\}}_{-j} = v^{\{i,n\}}_{-j}$. So, we follow the same procedure to replace the terms $h_j(v^{\{i,n\}}_{-j})$ in \Cref{eq:bb-mod-3} to yield a similar equality where more terms that were dependent on $v_n$ are now replaced with $h_n$ functions, which are independent of $v_n$. Since the number of agents is finite, this process will stop after a finite number of iterations, reducing the terms in the first parentheses
consisting only of $h_n$ functions. This construction shows that a small perturbation of $v_n$, which keeps \Cref{eq:sink-ineq} unaffected, will violate the equality obtained through the iterative procedure described above. This completes the proof of the lemma.
 \end{proof}
 \Cref{lemma:sink} shows that there exists an agent with weight zero, which is a sink agent, and hence the proof of Theorem~\ref{thm:sink} is complete.
\end{proof}

\subsection*{Proof of \Cref{thm:det-bound}}
\begin{proof}
From \Cref{thm:sink}, we know that any $f$ that satisfies the properties mentioned in the statement of the current theorem
must be a neutral affine maximizer with at least one agent $i^*$ that has $w_{i^*} = 0$. We now show that the minimum sample inefficiency $r_n^M(f)$ is achieved when there is exactly {\em one} such agent $i^*$ and the weights of the other agents $i \in N \setminus \{i^*\}$ are equal. 
We prove this in two steps: (a) first we show that the proposed single-sink mechanism indeed gives a worst-case sample inefficiency of $1/n$, and having multiple sinks can only make the inefficiency worse, (b) having unequal weights for the rest of the agents is suboptimal for sample inefficiency.
This immediately proves the theorem since the proposed mechanism has sample inefficiency $\frac{1}{n}$. 

Step (a): the proposed mechanism, having exactly one $i^*$ such that $w_{i^*} = 0$, picks the welfare maximizing allocation not considering the sink agent $i^*$,
that is, $f(v) \in \argmax_{a \in A} \sum_{j \in N \setminus \{i^*\}} v_j(a)$. Denoting $a^*(v)$ to be the efficient allocation, we write:
 \begin{gather}
 \begin{split}
  \lefteqn{\max_{a \in A} \sum_{i \in N} v_i(a) - \sum_{i \in N} v_i(f(v))= \sum_{i \in N} v_i(a^*(v)) - \sum_{i \in N} v_i(f(v))} \\
  &= v_{i^*}(a^*(v)) - v_{i^*}(f(v)) + \left[\sum_{j \in N \setminus \{i^*\}} v_j(a^*(v)) - \sum_{j \in N \setminus \{i^*\}} v_j(f(v)) \right] < \left( \frac{M}{2} - \left( -\frac{M}{2} \right) \right) + 0 = M
 \end{split}
 \label{eq:inefficiency-bound}
 \end{gather}
 The first part of the above inequality comes from the fact that the valuations are drawn from $(-\frac{M}{2},\frac{M}{2})$ so the difference in valuation can at most be $M$. The second part of the inequality holds because $f(v)$ is the welfare maximizing allocation excluding agent $i^*$.

 It is easy to verify that this inequality is tight at the following valuation profile: $v_{i^*}(a) = \frac{M}{2} - \delta, v_{i^*}(z) = -\frac{M}{2} + \gamma, \ \forall z \neq a$, and $v_j(b) = -\frac{M}{2} + \epsilon, v_j(z) = -\frac{M}{2} + \frac{\epsilon}{2}, \ \forall z \neq b, \forall j \neq i^*$, where $\delta, \gamma, \epsilon >0$ are arbitrarily small. The alternatives are: $a^*(v) = a, f(v) = b$. Clearly, this satisfies the above inequality and by taking $\delta, \gamma, \epsilon \to 0$, we get that the supremum of the difference term approaches $M$, and hence the sample inefficiency becomes $\frac{1}{n}$.

 This counterexample also shows that having more than one sink agent will make the sample inefficiency worse---that is, larger. This is because we can replicate the valuation of $i^*$ for every other sink and the inequality above will be tightly upper bounded at $2M$ for 2 sinks, $3M$ for 3 sinks, etc. Consequently, the sample inefficiency increases to $\frac{2}{n}$, $\frac{3}{n}$ etc.

 To prove step (b), we use the following lemma.
 \begin{lemma}[Lowest Sample Inefficiency]
  \label{lemma:lub}
  In the class of neutral affine maximizers given by \Cref{eq:affine-max} having a sink agent $i^*$ (i.e., $w_{i^*} = 0$), the lowest sample inefficiency is achieved when $w_i = w$ for all $i \in N \setminus \{i^*\}$.
 \end{lemma}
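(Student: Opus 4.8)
The plan is to use the fact, just established in step (a), that the single-sink neutral affine maximizer with all non-sink weights equal attains sample inefficiency exactly $\frac1n$; it then suffices to show that \emph{every other} weight vector with $w_{i^*}=0$ does strictly worse. If some $k\neq i^*$ also has $w_k=0$, then agent $k$ is a second sink (its valuation is ignored by $f$, and by revenue equivalence its payment can be taken to absorb the surplus), so the replication argument used for multiple sinks in step (a) already gives $r_n^M(f)\ge\frac2n>\frac1n$. The real work is the remaining case: $w_i>0$ for all $i\neq i^*$ but the numbers $(w_i)_{i\neq i^*}$ are not all equal. Here I will exhibit a valuation profile on which the absolute inefficiency $\max_a\sum_i v_i(a)-\sum_i v_i(f(v))$ can be pushed strictly above $M$, which forces $r_n^M(f)>\frac1n$ and completes the lemma.

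The construction follows the decomposition in \Cref{eq:inefficiency-bound}: the absolute inefficiency equals the sink term $v_{i^*}(a^*(v))-v_{i^*}(f(v))\le M$ plus the non-sink term $\sum_{j\neq i^*}v_j(a^*(v))-\sum_{j\neq i^*}v_j(f(v))$. With equal weights $f$ maximizes the \emph{unweighted} non-sink welfare, so the second term is $\le 0$ --- which is exactly why the bound is only $M$ there --- whereas with unequal weights I can make it strictly positive while the sink still realizes the full gap $M$. I pick two non-sink agents $j,k$ with $w_j>w_k$ and work with two alternatives $a_1,a_2$ (when $m>2$, I give every agent a common value close to $-\tfrac M2$ on each remaining alternative, small enough that it is never selected by $f$ and never efficient, so that effectively only $a_1,a_2$ are relevant; consistent with \Cref{thm:incr-inefficiency} the extra alternatives can only make inefficiency worse, so this loses nothing). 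The profile is $v_j(a_1)=0$, $v_j(a_2)=-\eta$; $v_k(a_1)=-\epsilon$, $v_k(a_2)=\epsilon$; $v_\ell(a_1)=v_\ell(a_2)=0$ for every other non-sink $\ell$; and $v_{i^*}(a_2)=\tfrac M2-\delta$, $v_{i^*}(a_1)=-\tfrac M2+\gamma$, with $\delta,\gamma,\epsilon,\eta>0$ small.

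The crucial step is to choose the ratio $\eta/\epsilon$ in the open interval $\big(2w_k/w_j,\,2\big)$, which is nonempty precisely because $w_k<w_j$. Its left endpoint gives $w_j\eta>2w_k\epsilon$, so the \emph{weighted} non-sink sum is larger at $a_1$, forcing $f(v)=a_1$; its right endpoint gives $\eta<2\epsilon$, so the \emph{unweighted} non-sink sum is larger at $a_2$, and since the sink strongly prefers $a_2$, the efficient alternative is $a^*(v)=a_2$. Then
\begin{align*}
 \max_a\sum_i v_i(a)-\sum_i v_i(f(v))
 &=\Big(\big(\tfrac M2-\delta\big)-\big(-\tfrac M2+\gamma\big)\Big)+\Big((-\eta+\epsilon)-(-\epsilon)\Big)\\
 &=M-\delta-\gamma+(2\epsilon-\eta),
\end{align*}
and letting $\delta,\gamma\to 0$ the supremum over $V$ is at least $M+(2\epsilon-\eta)>M$ since $\eta<2\epsilon$. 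Hence $r_n^M(f)\ge\frac1{nM}\big(M+(2\epsilon-\eta)\big)>\frac1n$; combined with the equal-weight value $\frac1n$, the minimum over this class is attained exactly when $w_i=w$ for all $i\in N\setminus\{i^*\}$.

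The main obstacle will be the joint tuning of $\epsilon$ and $\eta$: I need $f$ and the efficient rule to disagree in the intended direction --- which pins $\eta/\epsilon$ strictly between $2w_k/w_j$ and $2$ --- while simultaneously keeping all reported values inside $(-\tfrac M2,\tfrac M2)$ and keeping the sink's realized gap equal to $M$ in the limit $\delta,\gamma\to 0$. The only other thing needing a (routine) check is that, when $m>2$, the extra alternatives really are inert, which is why they are pressed down near $-\tfrac M2$.
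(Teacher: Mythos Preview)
Your proposal is correct and follows the same overall strategy as the paper: assume two non-sink agents have unequal weights, construct a two-alternative profile on which the weighted (affine) maximizer and the unweighted (efficient) rule disagree, and let the sink contribute the full $M$ gap so that absolute inefficiency exceeds $M$.

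The only real difference is the specific profile. The paper pushes the two disagreeing non-sink agents $j,j'$ to extreme values near $\pm M/2$, obtaining an excess of $(1-w_{j'}/w_j)M$ over $M$; you keep the non-sink agents near $0$ and instead tune the ratio $\eta/\epsilon\in(2w_k/w_j,2)$, obtaining a smaller but still strictly positive excess $2\epsilon-\eta$. Your version makes the mechanism-vs-efficiency disagreement more transparent (the open interval for $\eta/\epsilon$ is exactly where weighted and unweighted orderings flip), while the paper's version yields a quantitatively larger lower bound; either suffices for the lemma. Your explicit handling of the ``second zero weight'' case and the inert extra alternatives for $m>2$ is fine and matches what the paper does in step~(a) and implicitly in its profile.
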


 \begin{proof}
  Suppose not, that is, $\exists j,j' \in N \setminus \{i^*\}$ such that, WLOG, $w_j > w_{j'}$. Consider the following valuation profile:
  \begin{gather*}
  \begin{array}{rcl}
   v_{i^*}(a) & = & -\frac{M}{2} + \gamma, \\
   v_{j}(a) & = & \frac{M}{2} - \delta, \\
   v_{j'}(a) & = & -\frac{M}{2} + \gamma,
  \end{array}
  \hspace{1mm}
  \begin{array}{rcl}
   v_{i^*}(b) & = & \frac{M}{2} - \delta, \\
   v_{j}(b) & = & \frac{M}{2} - \frac{w_{j'}}{w_j} M - \epsilon, \\
   v_{j'}(b) & = & \frac{M}{2} - \delta, \\
  \end{array}
  \hspace{1mm}
  \begin{array}{rcl}
   v_i(a) & = & v_i(b), \ \forall i \in N \setminus \{ i^*, j, j'\}, \\
   v_i(z) & = & -\frac{M}{2} + \frac{\gamma}{2}, \ \forall z \in A \setminus \{a,b\}, \forall i \in N.
  \end{array}
  \end{gather*}
  The constants $\delta, \gamma, \epsilon>0$ are arbitrarily small.
  It is easy to verify that on this profile, by choosing the constants $\delta, \gamma, \epsilon$ appropriately small, the affine maximizer will return $a$. But the efficient alternative is $a^*(v) = b$. Consider the term $\sum_{i \in N} v_i(a^*(v)) - \sum_{i \in N} v_i(f(v))$. To this term, agent $i^*$ contributes inefficiency $M$, agent $j'$ contribute $M$, and agent $j$ contributes $\left(-\frac{w_{j'}}{w_j} M \right)$, taking the limiting values of $\delta, \gamma, \epsilon \to 0$. Therefore, the inefficiency term on this profile equals $M + \left( 1 - \frac{w_{j'}}{w_j} \right) M > M$, the maximum inefficiency when the weights are equal except $i^*$ (by the arguments just before \Cref{lemma:lub}). Hence, this mechanism cannot achieve the lowest sample inefficiency, which is a contradiction.
 \end{proof}
 \Cref{lemma:lub} and the arguments before it complete the proof of the theorem.
\end{proof}

\subsection*{Proof of \Cref{thm:unimprovability}} \smallskip

\begin{proof}
 Suppose, there exists a deterministic, strategyproof, and neutral mechanism $\langle f, \mathbf{p} \rangle$ that also satisfies $\lim_{n \to \infty} \sup_{v \in V} T_2^n(\mathbf{p},v) = 0$. It implies that, at the limit, the mechanism has no budget imbalance, i.e., $\lim_{n \to \infty} \sup_{v \in V} \left| \sum_{i=1}^n p_i(v) \right| = 0$. From the arguments in \Cref{thm:sink} (\Cref{eq:affine-max,eq:payment-rev-eq}), we know that $f$ is a neutral affine maximizer and payments are of the form $p_i(v_i,v_{-i}) = h_i(v_{-i}) - \frac{1}{w_i} \left( \sum_{j \neq i} w_j v_j(f(v)) \right), \forall w_i > 0$. We already have the sink mechanism where at least one $w_i = 0$ and the above sum can be made smallest (exactly zero) for every profile $v \in V$. However, that yields a constant upper bound for the term $\lambda \cdot T_1^n(f,v) + (1 - \lambda) \cdot T_2^n(\mathbf{p},v)$. Hence, we need to consider the case $w_i > 0, \forall i$, which implies that
\[\lim_{n \to \infty} \sup_{v \in V} \left| \sum_{i=1}^n \left( h_i(v_{-i}) - \frac{1}{w_i} \left( \sum_{j \neq i} w_j v_j(f(v)) \right) \right) \right| = 0.\]
This implies that for every $\epsilon > 0$, there exists $N_{\epsilon} \in \mathbb{Z}_{\geq 0}$ such that for all $n \geq N_{\epsilon}$,
\begin{equation}
 \label{eq:convergence}
 \left| \sum_{i=1}^n \left( h_i(v_{-i}) - \frac{1}{w_i} \left( \sum_{j \neq i} w_j v_j(f(v)) \right) \right) \right| < \epsilon, \ \forall v \in V.
\end{equation}
We show that this identity leads to a contradiction for an appropriately chosen $v$. Note that this immediately proves the theorem, because if there does not exist any mechanism $\langle f, \mathbf{p} \rangle$ that satisfies the properties mentioned in the theorem statement and makes $\lim_{n \to \infty} \sup_{v \in V} T_2^n(\mathbf{p},v) = 0$, then the best possible lower bound is a constant, i.e., $\sup_{v \in V} T_2^n(\mathbf{p},v) = \Omega(1)$. Therefore, the best lower bound for the spillover factor $\rho_n(f, \mathbf{p})$ is $\Omega \left( \frac{1}{n} \right)$ and this is achievable by the sink mechanism.

We prove this for a set of alternatives $A = \{0,1\}$ for similar reasons mentioned in \Cref{lemma:sink}.
As an illustration of the general proof, let us consider the same argument when $N_{\epsilon} = 2$. Let the valuations are $(v_1 + \delta, v_2)$ for alternative $1$ and zero otherwise ($\delta > 0$). Also assume that the numbers are such that
\[w_1 (v_1 + \delta) + w_2 v_2 > 0, \text { and } w_1 v_1 + w_2 v_2 < 0.\]
That means, the affine maximizer results in $1$ at profile $(v_1 + \delta, v_2)$ and $0$ at $(v_1, v_2)$. The above inequalities can be written concisely as
\begin{equation}
 \label{eq:af-max-ineq}
 -w_1 \delta < w_1 v_1 + w_2 v_2 < 0.
\end{equation}
Now by the convergence relation of \Cref{eq:convergence}, we have
\begin{align*}
 &\left| h_1(v_2) + h_2(v_1 + \delta) - \frac{w_2}{w_1} v_2 - \frac{w_1}{w_2} (v_1 + \delta) \right| < \epsilon \\
 &\left| h_1(v_2) + h_2(v_1) \right| < \epsilon
\end{align*}
These inequalities imply\footnote{If $|x+z| < \epsilon$ and $|y+z| < \epsilon$, then $|x-y| = |x+z - (y+z)| \leq |x+z| + |y+z| < 2 \epsilon$.}
\begin{align}
 \left| h_2(v_1 + \delta) - \frac{w_2}{w_1} v_2 - \frac{w_1}{w_2} (v_1 + \delta) - h_2(v_1) \right| &< 2\epsilon \nonumber \\
 \Rightarrow \quad \left| \frac{w_2}{w_1} v_2 + \left(h_2(v_1) - h_2(v_1 + \delta) + \frac{w_1}{w_2} (v_1 + \delta)\right) \right| &< 2\epsilon \label{eq:value-2agents}
\end{align}
But this inequality is violated by choosing a large enough $\delta$ and large negative $v_2$ in \Cref{eq:af-max-ineq}. This is possible to pick since the valuations are picked from $(-M/2, M/2)$ and $M$ is large by definition of $\rho_n$ (\Cref{eq:sample-joint-inefficiency}). Also note that, the term within parentheses in \Cref{eq:value-2agents} is independent of $v_2$, hence changes in the $v_2$ will not affect them. Our only assumed relation is \Cref{eq:af-max-ineq}, and a suitable choice satisfying it violates \Cref{eq:value-2agents}.

The general proof of this theorem extends this idea for any $N_{\epsilon} = n \geq 2$. Let the agents are numbered in the decreasing order of their weights WLOG, i.e., $w_i \geq w_{i+1}, i=1,2, \ldots, n-1$. We consider the valuation profile $(v_1 + \delta, v_2 + \delta, \ldots, v_{n-1} + \delta, v_n), \delta > 0$ such that
\begin{equation}
 \label{eq:gen-unimprovability-ineq}
 -\delta \sum_{i=1}^{n-1} w_i < \sum_{i=1}^{n} w_i v_i < -\delta \sum_{i=1}^{n-2} w_i
\end{equation}
The above inequalities imply that the affine maximizer alternative for the profile mentioned above is $1$. However, if any agent $i$'s, $i = 1,2,\ldots,n-1$, valuation changes to $v_i$ from $v_i + \delta$, the alternative changes to $0$. We use a generic notation $v^S$ to denote this profile, where $S$ denotes the set of agents such that for all agents $k \in S$, the valuations are $v_k$, and for all $j \notin S$, the valuations are $v_j + \delta$. Hence, $v^{\{n\}}$ is the profile mentioned before: $(v_1 + \delta, v_2 + \delta, \ldots, v_{n-1} + \delta, v_n)$ and $v^{\{n-1,n\}}$ is the profile: $(v_1 + \delta, v_2 + \delta, \ldots, v_{n-1}, v_n)$, for example. Note that, the following term in \Cref{eq:convergence} can be reorganized as
\[\sum_{i=1}^n \frac{1}{w_i} \left( \sum_{j \neq i} w_j v_j(f(v)) \right) = \sum_{i=1}^n  \left( \sum_{j \neq i} \frac{1}{w_j} \right) w_i v_i(f(v)).\]
Since, $f(v^{\{n\}}) = 1$, from \Cref{eq:convergence} we have
\begin{equation}
 \label{eq:convergence-mod-1}
 \left| \left( \sum_{i=1}^{n-1} h_i(v^{\{n\}}_{-i}) + h_n(v^{\{n\}}_{-n}) \right) - \left( \sum_{i=1}^n  \left( \sum_{j \neq i} \frac{1}{w_j} \right) w_i v_i + \sum_{i=1}^{n-1}  \left( \sum_{j \neq i} \frac{1}{w_j} \right) w_i \delta \right) \right| < \epsilon.
\end{equation}
The idea of the proof is to make a series of substitutions in the first parentheses of the expression above, leaving the terms in the other parentheses unchanged. Note that, the expression in the second parentheses depends on $v_n$, while the expression $h_n(v^{\{n\}}_{-n})$ does not. The substitutions sequentially eliminate the dependency on $v_n$ from all the terms in the first parentheses, similar to what we did in the two agent case before. This will also increase the RHS of the inequality in \Cref{eq:convergence-mod-1}, but it will be a finite constant factor of $\epsilon$. This leads to a contradiction, since $v_n$ can be chosen arbitrarily large negative by choosing a large positive $\delta$, and still continues to satisfy \Cref{eq:gen-unimprovability-ineq} but violates \Cref{eq:convergence-mod-1}.

The substitutions will involve the term $\sum_{i=1}^{n-1} h_i(v^{\{n\}}_{-i})$ in the first parentheses of \Cref{eq:convergence-mod-1}. Consider the profiles $v^{\{j,n\}}, j = 1, \ldots, n-1$. In each of these profiles, $f(v^{\{j,n\}}) = 0$ (due to the choice of $v^{\{n\}}$ in \Cref{eq:gen-unimprovability-ineq}). Hence,
\begin{equation}
 \label{eq:convergence-mod-2}
 \left| \sum_{i=1}^{n-1} h_i(v^{\{j,n\}}_{-i}) + h_n(v^{\{j,n\}}_{-n}) \right| < \epsilon, \ \forall j \in \{1,\ldots, n-1\}.
\end{equation}
Note that $v^{\{i,n\}}_{-i} = v^{\{n\}}_{-i}$. Hence, we can substitute terms from \Cref{eq:convergence-mod-2} to the terms in the first parentheses of \Cref{eq:convergence-mod-1} to get
\begin{align}
 \label{eq:convergence-mod-3}
 \lefteqn{ \left| \left( - \sum_{i=1}^{n-1} \sum_{j \neq \{i,n\}} h_j(v^{\{i,n\}}_{-j}) - \sum_{j \neq n} h_n(v^{\{j,n\}}_{-n}) + h_n(v^{\{n\}}_{-n}) \right) - \right. } \nonumber \\
 & \qquad \left. \left( \sum_{i=1}^n  \left( \sum_{j \neq i} \frac{1}{w_j} \right) w_i v_i + \sum_{i=1}^{n-1}  \left( \sum_{j \neq i} \frac{1}{w_j} \right) w_i \delta \right) \right| < n \epsilon.
\end{align}
We now replace the terms $h_j(v^{\{i,n\}}_{-j})$ in the first summation of the first parentheses above. All other terms in that parentheses are $h_n$ functions and, therefore, are independent of $v_n$. For every $i \neq n$, consider the valuation profiles $v^{\{j,i,n\}}, j \neq i, n$. By \Cref{eq:gen-unimprovability-ineq}, $f(v^{\{j,i,n\}}) = 0$, hence, we get an inequality similar to \Cref{eq:convergence-mod-2}:
\begin{equation}
 \label{eq:convergence-mod-4}
 \left| \sum_{k=1}^{n-1} h_k(v^{\{j,i,n\}}_{-k}) + h_n(v^{\{j,i,n\}}_{-n}) \right| < \epsilon, \ \forall j \neq i,n.
\end{equation}
Also, note that, $v^{\{j,i,n\}}_{-j} = v^{\{i,n\}}_{-j}$. So, we follow the same procedure to replace the terms $h_j(v^{\{i,n\}}_{-j})$ in \Cref{eq:convergence-mod-3} to yield a similar inequality where the RHS is replaced by a larger term. Since the number of agents is finite, this process will stop after a finite number of iterations, reducing the terms in the first parentheses only consisting of $h_n$ functions, which are independent of $v_n$, and the RHS of the inequality being a finite factor $K(n) \epsilon$ (say). This construction shows that the choice of a suitably large $\delta$ and negative $v_n$, which keeps \Cref{eq:gen-unimprovability-ineq} unaffected, can violate the inequality obtained through the iterative procedure described above. Hence the claim is established.
\end{proof}

\subsection*{Proof of \Cref{thm:gen-sink}}

\begin{proof}
 Assume $m = n+1$. The proof generalizes to any $m > n$. Consider the valuation profile $v_i = (v_i(a_1), \ldots, v_i(a_n), v_i(a_{n+1}))$ where $v_i(a_i) = -M/2 + \epsilon/2, v_i(a_{n+1}) = M/2 - \epsilon$ and $v_i(a_j) = M/2 - \epsilon/2, \forall j \neq i, n+1$, and $\epsilon > 0$ is arbitrarily small, $i \in N$. This profile is possible to construct since $m > n$. Clearly, the efficient alternative is $a_{n+1}$, but if any agent $i$ is picked as a sink, the alternative changes to $a_i$, which has inefficiency of $M - \epsilon$. Therefore the expected sample inefficiency for any generalized sink mechanism is $\frac{1}{nM} (M - \epsilon)$. Taking $\epsilon \to 0$ proves the theorem.
\end{proof}

\subsection*{Proof of \Cref{thm:incr-inefficiency}}

\begin{proof}
 Suppose, for $m_2$ alternatives the valuation profile $v^*$ yields the worst inefficiency $r_{n,m_2}^M(f)$. Clearly, we can append the other alternatives when we increase the number of alternatives to $m_1$ with values arbitrarily close to $-M/2$ so that they never change the optimal alternative. Hence the inefficiency cannot decrease.
\end{proof}

\subsection*{Proof of \Cref{thm:naive-random}}

\begin{proof}
 Consider an arbitrary agent $i$. If agent $i$ is chosen as a sink, the maximum {\em absolute} inefficiency that it can produce is $M$ (by the same argument as in \Cref{eq:inefficiency-bound}, and we refer to the unnormalized difference term $\sum_{i \in N} v_i(a^*(v)) - \sum_{i \in N} v_i(f(v))$ as the absolute inefficiency). Say the efficient alternative is $a$. This inefficiency is achieved when the sum of valuations of the agents other than $i$ at the other alternative $b$ is just higher than that of those agents at $a$, i.e., $\sum_{j \neq i} v_j(b) = \sum_{j \neq i} v_j(a) + \epsilon$, where $\epsilon > 0$ and small, and also the difference in valuations of agent $i$ at these two alternatives is maximum, i.e., $v_i(a) - v_i(b) = M - \delta$, where $\delta > 0$ and small. This implies that, without $i$, the population is almost equally divided among the alternatives $a$ and $b$ with a marginal bias to $b$ and agent $i$ is `maximally' in favor of $a$. The difference $v_i(a) - v_i(b) = M - \delta$ is achieved only when the values are close to $v_i(a) = \frac{M}{2} - \frac{\delta}{2}$ and $v_i(b) = -\frac{M}{2} + \frac{\delta}{2}$, since all valuations must lie within $\left(\frac{M}{2}, -\frac{M}{2} \right)$. Agent $i$, therefore, is {\em pivotal} for making the decision in favor of $a$.

 Now, we claim that there cannot be more than $\left \lceil \frac{n}{2} \right \rceil$ such pivotal agents $i$. Suppose for contradiction that there are $> \left \lceil \frac{n}{2} \right \rceil$ such pivotal agents. We present the argument for $\left \lceil \frac{n}{2} \right \rceil + 1$ for brevity, but it generalizes to the case with more number of pivotal agents.
 For each of these agents $k$, $v_k(a)$ is arbitrarily close to $\frac{M}{2}$ and $v_k(b)$ is arbitrarily close to $-\frac{M}{2}$. Therefore, if any of them, say agent $k^*$,
 is chosen as a sink, there are $\left \lceil \frac{n}{2} \right \rceil$ other similar agents who still make the decision of the mechanism in favor of $a$ (since the sum valuation for $a$ will be larger than $b$ by an unsurmountable value $\approx \left \lceil \frac{n}{2} \right \rceil M$) irrespective of the valuation profiles of the other agents. This implies that $k^*$ is not a pivotal agent, which is a contradiction.

 The mechanism chooses each sink agent with probability $\frac{1}{n}$. Therefore, the expected inefficiency can at most be $\frac{1}{n} \cdot \left \lceil \frac{n}{2} \right \rceil \cdot M$. Divide this by $nM$ to get the expected sample inefficiency. It is easy to see that this bound is tight. The valuation profile that achieves this bound has $\left \lceil \frac{n}{2} \right \rceil$ agents having valuations $v(a) \approx M/2, v(b) \approx -M/2$ and the rest of the agents have the reverse valuations.
\end{proof}

\subsection*{Proof of \Cref{thm:LB-gen-sink}}

\begin{proof}
 For $n=m=2$, either the two agents prefer the same candidate or opposing candidates. (Ties are broken in favor of $a_1$, say.) If they agree, any probability of the generalized sink mechanism yields zero absolute inefficiency, since the efficient alternative will be chosen irrespective of which agent is the sink. Note that the payments are always zero for two-agent generalized sink mechanisms. When the agents oppose, we claim that in every opposing profile, a strategyproof generalized sink mechanism must have the same probability of picking the sinks. Suppose not, that is, for a specific generalized sink mechanism $g: V \to \Delta N$, the probabilities of picking the sinks are different in profiles $v = (v_1, v_2)$ and $v' = (v_1', v_2')$, i.e., $g(v) \neq g(v')$. Consider the transition: $v = (v_1, v_2) \to (v_1', v_2) \to (v_1', v_2') = v'$. The sink-picking probabilities $g$ must have changed in at least one of these two transitions, that is, either $g(v_1, v_2) \neq g(v_1', v_2)$ or $g(v_1', v_2) \neq g(v_1', v_2')$.
 But this is a contradiction to strategyproofness since at least one agent will misreport in that profile pair. She will prefer to increase the probability of the other agent becoming sink so that her favorite candidate has higher probability of being selected, which increases her utility since payment is zero. For example, suppose in the first transition, the probability of agent $1$ being sink is higher in the profile $(v_1,v_2)$, and consequently, probability of agent $2$ being sink is lower. Then agent $1$ will misreport her valuation to $v_1'$. Now, among all fixed probability distributions, $(0.5, 0.5)$ gives the minimum absolute inefficiency which is $\frac{1}{2}$.
\end{proof}

\subsection*{Proof of \Cref{thm:random-opt-lower}} \smallskip

The problem of \Cref{eq:optimization} can be written in a standard LP representation as follows. The notation $f_a(v)$ denotes the probability of picking alternative $a$ by the randomized mechanism $f$ when the valuation profile is $v$.

\begin{equation}
 \label{eq:LP}
 \underset{f,\mathbf{p}}{\min} \qquad \ell
\end{equation}
\begin{flalign*}
&\text{s.t.} &&\\
 & [ v_1(a) \cdot f_a(v_1,v_2) + v_1(b) \cdot f_b(v_1,v_2) - p_1(v_1,v_2) ] - [ v_1(a) \cdot f_a(v_1',v_2)  \\
  & \qquad + v_1(b) \cdot f_b(v_1',v_2) - p_1(v_1',v_2) ] \geq 0, \ \forall v_1, v_1', v_2, \quad \text{\bf Agent 1, SP} \\
 &  [ v_2(a) \cdot f_a(v_1,v_2) + v_2(b) \cdot f_b(v_1,v_2) - p_2(v_1,v_2) ] - [ v_2(a) \cdot f_a(v_1,v_2') \\
  &  \qquad + v_2(b) \cdot f_b(v_1,v_2') - p_2(v_1,v_2') ] \geq 0, \ \forall v_1, v_2, v_2', \quad \text{\bf Agent 2, SP}
\end{flalign*}
\begin{flalign*}
 &  f_a(v) + f_b(v) = 1, \ \forall v \in V, \quad \text{\bf SCF} \\
 &  p_1(v) + p_2(v) = 0, \ \forall v \in V, \quad \text{\bf Budget Balance}
\end{flalign*}
\begin{flalign*}
 &  \ell + (v_1(a) + v_2(a)) \cdot f_a(v) + (v_1(b) + v_2(b)) \cdot f_b(v) \\
  &  \qquad \geq \max_{x \in \{a,b\}} (v_1(x) + v_2(x)), \ \forall v \in V, \quad \text{\bf Max Inefficiency} \\
  &  f_a(v), f_b(v) \geq 0, \ \forall v \in V
\end{flalign*}

\begin{proof}
 For $k=3$, each agent has $3^2 = 9$ valuations, since the number of alternatives is $2$, and therefore, the number of valuation profiles is $81$. The optimization variables are
 \begin{gather*}
  \mathbf{x} := (f_a(v^0), f_b(v^0), \ldots, f_a(v^{80}), f_b(v^{80}), p_1(v^0), p_2(v^0), \ldots, p_1(v^{80}), p_2(v^{80}), \ell )^\top.
 \end{gather*}
 Here the $81$ valuation profiles are indexed from $0$ to $80$ and are denoted by the superscripts. Hence there are $81 \times 4 + 1 = 325$ variables to the discretized relaxation of the primal problem of \Cref{eq:LP}. However, we can significantly reduce the number of variables using the symmetry of the LP. The symmetry that we consider are {\em anonymity}, i.e., the SCF alternative is invariant to the permutation of the agents, and the payments are permutated according to the permutation of the agents (defined below), and {\em neutrality}, i.e., the relabeling of the alternatives changes the alternative according to the same relabeling (\Cref{def:neutrality}).

\begin{definition}[Anonymity]
\label{def:anonymity}
 A mechanism $\langle f, \mathbf{p} \rangle$ is {\em anonymous} if for every permutation of the agents $\lambda$ we have
 $$f(\lambda(v)) = f(v) \quad \text{ and } \quad p_{\lambda(i)}(\lambda(v)) = p_i(v), \quad \forall v \in V, \forall i \in N.$$
\end{definition}
Note that, similar to \Cref{def:neutrality}, we have overloaded the notation of $\lambda$ for the valuation profile to denote that the profile where the agents are permutated according to $\lambda$.

 \begin{lemma}
  \label{lemma:symmetry}
  For every strategyproof, budget-balanced, randomized mechanism that achieves the minimum absolute inefficiency, there exists an {\em anonymous, neutral}, strategyproof, budget-balanced, randomized mechanism that achieves the same absolute inefficiency.
 \end{lemma}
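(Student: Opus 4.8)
\noindent\emph{Proof proposal.}
The plan is a standard \emph{symmetrization} argument: given an optimal mechanism $\langle f,\mathbf p\rangle$ we average it over the group $G = \Lambda \times \Pi$, where $\Lambda$ is the set of agent permutations and $\Pi$ the set of alternative permutations, and check that the average is still feasible and no worse. For each $(\lambda,\pi)\in G$, let $\langle f^{\lambda,\pi},\mathbf p^{\lambda,\pi}\rangle$ be the mechanism obtained by relabeling agents by $\lambda$ and alternatives by $\pi$; concretely, $f^{\lambda,\pi}(v)=\pi\big(f(\lambda^{-1}(\pi^{-1}(v)))\big)$ and the payments are transported accordingly, so that the pair is exactly the image of $\langle f,\mathbf p\rangle$ under the $G$-action consistent with \Cref{def:anonymity} and \Cref{def:neutrality}. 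Define
\[
  \bar f(v)=\frac{1}{|G|}\sum_{(\lambda,\pi)\in G} f^{\lambda,\pi}(v),
  \qquad
  \bar p_i(v)=\frac{1}{|G|}\sum_{(\lambda,\pi)\in G} p^{\lambda,\pi}_i(v).
\]

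I would then verify four points. (i) \emph{Feasibility}: each $f^{\lambda,\pi}(v)$ is a distribution over $A$, so the convex combination $\bar f(v)$ is too. (ii) \emph{Strategyproofness}: relabeling preserves strategyproofness, because utilities and social welfare transform equivariantly under permutations of agents and of alternatives, so each $\langle f^{\lambda,\pi},\mathbf p^{\lambda,\pi}\rangle$ is strategyproof; and strategyproofness survives the averaging since, for agent $i$, her truthful-report utility under $\langle\bar f,\bar{\mathbf p}\rangle$ is the average of her truthful-report utilities under the components and her misreport utility is the average of her misreport utilities under the components, so the component-wise inequalities of \Cref{def:strategyproofness} add up. (iii) \emph{Budget balance, anonymity, neutrality}: $\sum_i\bar p_i(v)=\frac{1}{|G|}\sum_{(\lambda,\pi)}\sum_i p^{\lambda,\pi}_i(v)=0$ since each component is budget balanced; and averaging over the \emph{whole} group $G$ makes $\langle\bar f,\bar{\mathbf p}\rangle$ $G$-invariant, i.e.\ it satisfies both \Cref{def:anonymity} and \Cref{def:neutrality} (a reindexing bijection argument: for fixed $(\mu,\sigma)$, the map $(\lambda,\pi)\mapsto(\mu,\sigma)(\lambda,\pi)$ permutes $G$, so the averaged object is fixed). (iv) \emph{Inefficiency does not increase}: for a fixed profile $v$, the absolute inefficiency $\max_{a}\sum_i v_i(a)-\sum_i v_i(f(v))$ is an affine function of $f(v)$, so the inefficiency of $\bar f$ at $v$ equals the $G$-average of the inefficiencies of $f^{\lambda,\pi}$ at $v$; and because social welfare is invariant under agent-relabeling and equivariant under alternative-relabeling, the inefficiency of $f^{\lambda,\pi}$ at $v$ equals the inefficiency of the original $f$ at the relabeled profile $(\lambda,\pi)^{-1}(v)\in V$. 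Hence the inefficiency of $\bar f$ at every $v$ is an average of quantities each at most $\sup_{v'\in V}$ of the original mechanism's inefficiency, so $\sup_v$ of $\bar f$'s inefficiency is at most that of $\langle f,\mathbf p\rangle$. Since $\langle\bar f,\bar{\mathbf p}\rangle$ is feasible (strategyproof, budget balanced) and $\langle f,\mathbf p\rangle$ was assumed optimal, the two values coincide, which is the claim.

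The only place that needs care — and hence the main obstacle — is fixing the $G$-action on the \emph{payments} so that it is simultaneously (a) the right notion to match the transformations appearing in \Cref{def:anonymity} and \Cref{def:neutrality}, and (b) such that relabeling preserves both strategyproofness and budget balance; once these bookkeeping identities are in place, steps (i)--(iv) are routine. (For the later use in the discrete LP of \Cref{eq:LP}, the same averaging is applied to the finite, discretized problem, where $G$ is a finite group acting on the finitely many valuation profiles, so all sums are genuinely finite.)
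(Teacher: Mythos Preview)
Your proposal is correct and follows essentially the same symmetrization-by-averaging argument as the paper: both take an optimal mechanism, permute it by all agent/alternative relabelings, observe that each permuted copy is still feasible with the same objective value, and average to obtain an anonymous, neutral optimum. The only cosmetic difference is that the paper phrases everything in the LP of \Cref{eq:LP} (permutations reorder constraints, convexity of the feasible set gives feasibility of the average), whereas you verify strategyproofness, budget balance, and the inefficiency bound directly; your treatment is in fact a bit more explicit and stated for general $n,m$ rather than $n=m=2$.
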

 \begin{proof}
 We prove this for two agents and two alternatives. The same argument generalizes to any number of agents and alternatives. Consider an optimal solution of the optimization problem of \Cref{eq:LP}. This yields a solution $\mathbf{x}^*$ (say). Suppose, we relabel the agents $1$ and $2$ by swapping their identities, which changes the valuation profiles accordingly. For example, now the payment $p_1(v_1,v_2)$ is swapped with $p_2(v_2,v_1)$. We keep the SCF alternatives identical, i.e., $f_a(v_1,v_2) = f_a(v_2,v_1)$. Now consider the resulting vector of variables $\mathbf{x}^*_{\text{AGENT-SWAP}}$. Note that, this permutation of the variables reorders the set of constraints in \Cref{eq:LP}. The SP constraints of agent $1$ now becomes the SP constraints of agent $2$ and vice-versa. SCF constraints remain identical, budget balance constraints are reordered but same, and the max-inefficiency constraints are also reordered. Hence $\mathbf{x}^*_{\text{AGENT-SWAP}}$ is a feasible solution of the LP (\Cref{eq:LP}) and since $\mathbf{x}^*_{\text{AGENT-SWAP}}$ and $\mathbf{x}^*$ has the same value for $\ell$, $\mathbf{x}^*_{\text{AGENT-SWAP}}$ is an optimal solution of the LP (\Cref{eq:LP}).

 Similarly, we swap the alternatives $a$ and $b$ and the valuations accordingly to obtain a different reordered vector $\mathbf{x}^*_{\text{ALT-SWAP}}$. This relabeling of the alternatives again reorders all the constraints in a different way than the earlier case, with $\ell$ remaining same in both these cases. In a similar way as before, we argue that $\mathbf{x}^*_{\text{ALT-SWAP}}$ is an optimal solution of the LP (\Cref{eq:LP}).

 Now, we swap both the alternatives and agents to obtain $\mathbf{x}^*_{\text{AGENT-ALT-SWAP}}$ which reorders the constraints in a two-fold manner, but the last variable of this vector remains $\ell$ as before, and therefore it is also an optimal solution of the LP (\Cref{eq:LP}).

 Now, we have $4$ optimal solutions given the original optimal solution $\mathbf{x}^*$, which are complementary to each other in terms of agents and alternatives, but all of them are strategyproof, budget-balanced, randomized mechanisms (since they are feasible solutions of \Cref{eq:LP}). Consider the average of all these solutions: $$\mathbf{x}^{A,N} = \frac{1}{4} (\mathbf{x}^* + \mathbf{x}^*_{\text{AGENT-SWAP}} + \mathbf{x}^*_{\text{ALT-SWAP}} + \mathbf{x}^*_{\text{AGENT-ALT-SWAP}}). $$
 By construction, $\mathbf{x}^{A,N}$ is anonymous and neutral, but this is also another optimal solution of the LP (\Cref{eq:LP}) (since the set of constraints is convex). Hence, we have proved the lemma for two agents and two alternatives. For $n$ agents and $m$ alternatives, we consider all $n!$ and $m!$ possible permutations of the agents and alternatives respectively and take the mean of them to obtain our resulting optimal solution that is both anonymous and neutral.
 \end{proof}

 Hence, it is WLOG to consider neutral and anonymous mechanisms to solve the optimization problem of \Cref{eq:LP}. This reduces the number of variables in the primal problem, since for valuations that are either agent permutated or alternative permutated or both permutated version of a valuation profile we have already considered, we can replace their constraints with the already considered variables. We write the coefficient matrix of the constraint set of the LP in \Cref{eq:LP} denoted by $A$ as follows.

\begin{gather}
\label{eq:matrix}
\arraycolsep=1.4pt\def\arraystretch{1.05}
\begin{array}{cccccccccccccccc}
   & f_a(v^0) & f_b(v^0) & \ldots & \ldots & f_a(v^{80}) & f_b(v^{80}) & \
  p_1(v^0) & p_2(v^0) & \ldots & \ldots & p_1(v^{80}) & p_2(v^{80}) & \ell & \\ \hdashline[1pt/5pt]
   \ldelim({14}{0.5em} & \
  v^0_1(a) & v^0_1(b) & -v^8_1(a) & -v^8_1(b) & 0 & 0 &  \
  -1 & 0 & 1 & 0 & 0 & 0 & 0 & \rdelim){14}{0.5em} \\
   & \
  v^0_2(a) & v^0_2(b) & -v^1_2(a) & -v^1_2(b) & 0 & 0 &  \
  0 & -1 & 0 & 1 & 0 & 0 & 0 & \\
   & \
  0 & 0 & \ldots & \ldots & 0 & 0 &  \
  0 & 0 & \ldots & \ldots & 0 & 0 & 0 & \\
   & \
  0 & 0 & -v^{79}_2(a) & -v^{79}_2(b) & v^{80}_2(a) & v^{80}_2(b) &  \
  0 & 0 & 0 & 1 & 0 & -1 & 0 & \\
   & \
  0 & 0 & \ldots & \ldots & 0 & 0 &  \
  0 & 0 & \ldots & \ldots & 0 & 0 & 0 & \\ \hdashline[1pt/5pt]
   & \
  1 & 1 & 0 & 0 & 0 & 0 &  \
  0 & 0 & 0 & 0 & 0 & 0 & 0 & \\
   & \
  0 & 0 & \ldots & \ldots & 0 & 0 &  \
  0 & 0 & 0 & 0 & 0 & 0 & 0 & \\
   & \
  0 & 0 & 0 & 0 & 1 & 1 &  \
  0 & 0 & 0 & 0 & 0 & 0 & 0 & \\ \hdashline[1pt/5pt]
   & \
  0 & 0 & 0 & 0 & 0 & 0 &  \
  1 & 1 & 0 & 0 & 0 & 0 & 0 & \\
   & \
  0 & 0 & 0 & 0 & 0 & 0 &  \
  0 & 0 & \ldots & \ldots & 0 & 0 & 0 & \\
   & \
  0 & 0 & 0 & 0 & 0 & 0 &  \
  0 & 0 & 0 & 0 & 1 & 1 & 0 & \\ \hdashline[1pt/5pt]
   & \
  w^0(a) & w^0(b) & 0 & 0 & 0 & 0 &  \
  0 & 0 & 0 & 0 & 0 & 0 & 1 & \\
  & \
  0 & 0 & \ldots & \ldots & 0 & 0 &  \
  0 & 0 & 0 & 0 & 0 & 0 & 1 & \\
   & \
  0 & 0 & 0 & 0 & w^{80}(a) & w^{80}(b) &  \
  0 & 0 & 0 & 0 & 0 & 0 & 1 &
\end{array}
\end{gather}

Where $w^p(x) = v^p_1(x) + v^p_2(x), x \in \{a, b\}$, and $p$ denotes the profile index. The header of the matrix shows the primal variables. The sections showed in dotted lines corresponds to the strategyproofness, valid SCF, budget balance, and maximum inefficiency constraints respectively. The RHS of the constrained inequalities of the LP is a vector $\mathbf{b}$ that looks as follows:
\[\mathbf{b} := (0, \ldots, 0, \mathbf{1}_{|V|}, \mathbf{0}_{|V|}, \max_x w^0(x), \ldots, \max_x w^{80}(x))^\top.\]
Denoting the cost vector of the LP as, $\mathbf{c} := (\mathbf{0}_{4|V|}, 1)^\top$, we can represent the LP of \Cref{eq:LP} in the standard form:
\begin{equation}
 \label{eq:LP-primal}
 \text{\bf primal} \qquad
 \begin{aligned}
  \min_{\mathbf{x}} && \mathbf{c}^\top \mathbf{x} \\
  \text{s.t.} && A \mathbf{x} \geq \mathbf{b}
 \end{aligned}
 \hspace{2cm}
 \text{\bf dual} \qquad
 \begin{aligned}
  \max_{\mathbf{y}} && \mathbf{b}^\top \mathbf{y} \\
  \text{s.t.} &&  \mathbf{y}^\top A \leq \mathbf{c}^\top
 \end{aligned}
\end{equation}
Our goal is to provide a lower bound of the optimal value of the primal. Hence, we consider its dual, and provide a feasible solution. By weak duality lemma, the value of the dual objective at that feasible point will be a lower bound of the primal. The dual variables represented by $\mathbf{y}$ consists of $(\lambda, \gamma, \mu, \delta)$. The $\lambda$ variables refer to the dual variables corresponding to the strategyproofness constraints, and we denote the dual variable that represent the strategyproofness of agent $i$ between the profiles $v^k$ and $v^l$ by $\lambda_{i,v^k, v^l}$. By this representation, we consider only such pairs of profiles $v^k$ and $v^l$ where only agent $i$'s valuation changes. The $\gamma$ variables are the dual variables corresponding to the constraint that the SCF must add to unity, and we denote the dual variable corresponding to value profile $v$ as $\gamma_v$. Dual variables $\mu$ and $\delta$ corresponds to the budget balance and the maximum inefficiency constraints. Since SCF and budget balance constraints are equalities, $\gamma$ and $\mu$ are unrestricted, while $\lambda$ and $\delta$ are non-negative. Additionally, in the primal problem the payment variable $p_i$'s were unrestricted, hence in the dual the corresponding constraints are equalities.

We now provide a dual feasible solution, which is represented with respect to the reduced set of dual variables. Using symmetry according to \Cref{lemma:symmetry}, we reduce the number of valuation profiles. We number the profiles from $0$ to $80$ in the following way: for the valuation $(-0.5,-0.5)$ of agent $1$, all possible valuations of agent $2$ from $(-0.5,-0.5)$ to $(0.5,0.5)$ (9 profiles) are listed, and then the valuation of agent $1$ is moved to $(-0.5,0)$. Due to symmetry, setting a primal variable $f_a(v)$ to a certain value also fixes $3$ other variables that are agent-swapped or alternative-swapped or both-swapped versions of this variable. Denote the reduced set of valuation profiles by $V_R$. This also reduces the dual variables $\gamma, \mu, \delta$ from $81$ to $27$ independent variables. However, for the $\lambda$ variables we need to list all of them since they correspond to constraints that involve two valuation profiles. Consider the following set of dual variables (numbers of $v$ and $v'$ correspond to the valuation profile numbers in the listing discussed above):
\[
 \begin{array}{ccc|c}
  i & v & v' & \lambda \\ \hline
  1 & 11 & 2 & 4/7 \\
  1 & 12 & 21 & 4/7 \\
  1 & 30 & 66 & 1/14 \\
  1 & 52 & 16 & 2/7 \\
  1 & 57 & 30 & 1/7 \\
  1 & 60 & 33 & 3/14 \\
  1 & 68 & 32 & 2/7 \\
  1 & 78 & 15 & 4/7 \\
  2 & 12 & 16 & 4/7 \\
  2 & 14 & 10 & 1/14 \\
  2 & 18 & 26 & 4/7 \\
  2 & 20 & 19 & 3/14 \\
 \end{array}
 \qquad
 \begin{array}{c|c}
  v & \gamma \\ \hline
  2 & 2/7 \\
  6 & 3/28 \\
  7 & 1/7 \\
  8 & 2/7 \\
  10 & -1/28 \\
  11 & 2/7 \\
  14 & -3/14 \\
  19 & -4/7 \\
  24 & -1/7
 \end{array}
 \qquad
 \begin{array}{c|c}
  v & \mu \\ \hline
  6 & -1/14 \\
  7 & 1/7 \\
  8 & 4/7 \\
  11 & -4/7 \\
  12 & -4/7 \\
  14 & 3/14 \\
  24 & 2/7
 \end{array}
 \qquad
 \begin{array}{c|c}
  v & \delta \\ \hline
  2 & 2/7 \\
  6 & 1/7 \\
  11 & 4/7
 \end{array}
\]
All other entries of the variables that are not shown in the list above are zero. Note that for only the $\lambda$ variables, the valuation profiles listed go beyond the index $27$, but for all other dual variables they are represented by the $27$ independent variables listed in $V_R$.

We claim that this is a feasible solution of the dual. The proof requires an exhaustive verification for each of the inequalities in the constraint set of the dual. However, we provide a few cases to give an insight how this example is picked. Consider, the variables $\lambda(1,52,16) = 2/7$ and $\lambda(1,68,32) = 2/7$. Note that, $v^{24} = ((0, 0.5), (0.5, 0)) = v^{52}$ and $v^{68} = ((0.5, 0), (0, 0.5))$ is an alternative swapped version of $v^{24}$. Also, none of the other variables involve any agent or alternative or both swap of this profile in the example we gave. Therefore, now we need to concentrate on the column $f_b(v^{24})$ in the matrix of \Cref{eq:matrix}. Note that the matrix of \Cref{eq:matrix} is also reduced on the column and the rows. On the column, each of the $f$ and $p$ columns are reduced to $|V_R|$, and on the rows, only the strategyproofness constraints retain the original number, but the SCF, budget balance and maximum inefficiency constraints reduce to $|V_R|$ in size. Carrying out the product with the terms we get $0.5 \times 2/7 + 0 \times 2/7 = 1/7$. While inspecting other variables, we find $\gamma(v^{24}) = -1/7$. Hence the sum of the products on the column $f_b(v^{24})$ gives $1/7 - 1/7 = 0$ which satisfies the inequality. This is not an isolated case, in all the columns $f_x(v)$ (the numbers of such variables are reduced because of symmetry), the examples are chosen such that the sum of the non-zero products in the SP constraints section and one non-zero product in the SCF constraints section add up to a non-positive number (for example, repeat the same argument for $v^{78}$ and $v^{19}$).

Similarly, consider the column $p_2(v^{12})$: the variable $\lambda(2,12,16) = 4/7$ gets multiplied with $-1$ in this column since the constraint for agent $2$ in the profile $v^{12}$ gives a $-1$ coefficient for $p_2(v^{12})$. However, the variable $\mu(v^{12}) = -4/7$ which is multiplied with $1$ in this column, and we can inspect that no other product is non-zero on this column. Hence the sum of the products is $-8/7$ which is non-positive, and satisfies the dual constraint.

The easiest thing to verify is the last column, where the sum of the $\delta_v$ for the reduced set of $v$'s add to unity ($2/7 + 1/7 + 4/7$). Therefore, the example provided is a dual feasible solution. We compute the objective value of the solution:
\begin{align*}
 \lefteqn{\sum_{v \in V_R} \gamma_v + \sum_{v \in V_R} \delta_v \max_{x \in \{a,b\}} w^v(x)} \\
  & = \frac{2}{7} + \frac{3}{28} + \frac{1}{7} + \frac{2}{7} - \frac{1}{28} + \frac{2}{7} - \frac{3}{14} - \frac{4}{7} - \frac{1}{7} + 0.5 \times \frac{4}{7} \\
  & = \frac{1}{7}
\end{align*}
This completes the proof of the theorem.
\end{proof}


\end{document}